\pdfoutput=1
\documentclass[10pt,a4paper]{article}
\usepackage[utf8]{inputenc}
\usepackage[T1]{fontenc}
\usepackage{authblk}
\usepackage{setspace}
\usepackage[margin=1.25in]{geometry}
\usepackage{graphicx}
\usepackage{amsmath}
\usepackage{amsthm}

\usepackage{pxfonts}           % preserves KL1 special arrow glyphs
\usepackage[numbers,sort&compress]{natbib}
\usepackage{booktabs}
\usepackage{array}
\usepackage{enumitem}
\usepackage[labelsep=colon,font=normalsize,figurename=Fig.,tablename=Table,skip=8pt]{caption}
\usepackage{microtype}
\usepackage[hidelinks]{hyperref}

\hypersetup{
  pdftitle={A Counterexample to KL1 Completeness and a Complete Disjunctive Calculus},
  pdfauthor={Damiano Presa, Fabio Aurelio D'Asaro, Massimiliano Badino},
  pdfkeywords={Kant, conditional imperatives, permissives, input/output logic,
               completeness, Agda}}
\title{A Counterexample to KL$_1$ Completeness\\
       and a Complete Disjunctive Calculus}
\author{Damiano Presa$^{1}$, Fabio Aurelio D'Asaro$^{1}$, and Massimiliano Badino$^{1}$\\\small $^{1}$Department of Human Sciences, University of Verona, Verona, Italy}
\date{}

\setlist{itemsep=0pt,parsep=0pt,topsep=4pt plus 2pt,leftmargin=*}
\setlist[itemize]{label=--}
\newtheoremstyle{spplain}{8pt plus 2pt minus 2pt}{8pt plus 2pt minus 2pt}%
  {\itshape}{0pt}{\bfseries}{}{.5em}{\thmname{#1}\thmnumber{ #2}\thmnote{ (#3)}}
\newtheoremstyle{spdefinition}{8pt plus 2pt minus 2pt}{8pt plus 2pt minus 2pt}%
  {\normalfont}{0pt}{\bfseries}{}{.5em}{\thmname{#1}\thmnumber{ #2}\thmnote{ (#3)}}
\newtheoremstyle{spremark}{8pt plus 2pt minus 2pt}{8pt plus 2pt minus 2pt}%
  {\normalfont}{0pt}{\itshape}{}{.5em}{\thmname{#1}\thmnumber{ #2}\thmnote{ (#3)}}
\theoremstyle{spplain}
\newtheorem{theorem}{Theorem}
\newtheorem{proposition}[theorem]{Proposition}
\newtheorem{lemma}[theorem]{Lemma}
\newtheorem{corollary}[theorem]{Corollary}
\newtheorem*{conjecture}{Conjecture}
\theoremstyle{spdefinition}
\newtheorem{definition}[theorem]{Definition}
\newtheorem{example}[theorem]{Example}
\theoremstyle{spremark}
\newtheorem{remark}[theorem]{Remark}

\newcommand{\must}{\boxright}
\newcommand{\may}{\Diamondright}
\newcommand{\anyto}{\mathrel{\circ\mkern-7mu\rightarrow}}
\newcommand{\KL}[1]{\mathrm{KL}_{#1}}
\newcommand{\klone}{\texorpdfstring{KL$_1$}{KL1}}
\newcommand{\kltwo}{\texorpdfstring{KL$_2$}{KL2}}
\newcommand{\klthree}{\texorpdfstring{KL$_3$}{KL3}}
\newcommand{\kld}{KL$_1^{\vee}$}
\newcommand{\kltwod}{KL$_2^{\vee}$}
\newcommand{\entails}{\models_{\KL{1}}}
\newcommand{\nentails}{\not\models_{\KL{1}}}
\newcommand{\entailstwo}{\models_{\KL{2}}}
\newcommand{\derives}{\vdash_{\KL{1}}}
\newcommand{\nderives}{\nvdash_{\KL{1}}}
\newcommand{\derivesd}{\vdash^{\vee}_{\KL{1}}}
\newcommand{\nderivesd}{\nvdash^{\vee}_{\KL{1}}}

\newcommand{\out}{\mathit{out}_1}
\newcommand{\outtwo}{\mathit{out}_2}
\newcommand{\cns}{\mathit{cns}}
\newcommand{\stp}{\mathit{step}}
\newcommand{\kl}{\mathit{kl}_1}
\newcommand{\kltwofn}{\mathit{kl}_2}
\newcommand{\deriv}{\mathit{deriv}_1}
\newcommand{\derivd}{\mathit{deriv}^{\vee}_1}
\newcommand{\derivtwo}{\mathit{deriv}_2}
\newcommand{\derivtwod}{\mathit{deriv}^{\vee}_2}
\newcommand{\aux}{\mathit{aux}}
\newcommand{\Valid}{\mathit{Valid}}
\newcommand{\Perm}{\mathit{Perm}}
\newcommand{\nx}{\mathit{succ}}
\newcommand{\Atoms}{\mathcal{A}}
\newcommand{\Lits}{\mathcal{L}}
\newcommand{\Cons}{\mathcal{C}}
\newcommand{\rn}[1]{\textup{\textsc{#1}}}
\newcommand{\agda}[1]{\texttt{#1}}
\newcommand{\Rz}{R_0}
\newcommand{\rz}{r_0}
\newcommand{\cmp}[1]{\overline{#1}}
\newcommand{\nt}{\mathord{\sim}}
\newcommand{\infr}[4]{\mbox{\rn{#1}}\;\dfrac{\;#2\;}{\;#3\;}\;#4}

\begin{document}
\maketitle
\begin{abstract}
The logic \klone{} of Evans, Sergot and Stephenson formalizes Kant's rules as
conditional imperatives $B\must C$ and permissives $B\may C$ over acts.
Its consequence relation is operational: a rule follows from a set of rules
when adjoining it leaves the set of outcomes unchanged for every input. The
authors proved their calculus sound and conjectured that it is complete. We
show that the conjecture fails: for distinct atoms $b,c,d$ the rules
$\top\may b\vee d$, $b\may c$, $d\may c$ and $c\must b\vee d$ entail
$\top\may c$, which is not derivable. Non-derivability is proved by an
invariant that holds at every derivation depth and for arbitrary
intermediate atoms. The obstruction is the side condition of
\rn{may-trans}, which requires each new act to bring with it a single
intermediary act. Replacing it by a disjunctive side condition yields
\rn{may-trans-disj}, a strict generalization from which the original rule
is derivable. Using a local characterization of entailment, we prove the
repaired calculus sound for arbitrary sets of rules and complete for finite
ones, and we show that, with finite inputs, completeness cannot be extended
to infinite sets. The main results on \klone{} are formalized in Agda.
For \kltwo{} we observe
that the published reduction to \klone{} passes through infinite rule sets;
re-establishing the decomposition for finite rule sets, we prove that \kltwo{}
with the disjunctive rule is sound and complete for finite sets of rules. We close with remarks on
\klthree{} and on the significance of completeness for the relation between
the operational and the inferential questions of Kant's logic.
\end{abstract}
\medskip
\noindent\textbf{Keywords:}\enspace Kant $\cdot$ Conditional
imperatives $\cdot$ Permissives $\cdot$ Input/output logic $\cdot$
Completeness $\cdot$ Formal verification
\medskip

% ---------------------------------------------------------------------------
\section{Introduction}
% ---------------------------------------------------------------------------

Evans, Sergot and Stephenson~\cite{ESS20} introduced a family of logics,
\klone, \kltwo{} and \klthree, designed to formalize Kant's conception of a
rule as a general procedure relating \emph{acts} rather than propositions.
In \klone, the base system, a rule is either a conditional imperative
$B\must C$ (``if you are doing all of $B$, do one of $C$!'') or a
conditional permissive $B\may C$ (``if you are doing all of $B$, feel free
to do any of $C$!''). Given a set $R$ of rules and a set $A$ of acts already
being performed, the function $\out(R,A)$ collects the alternative packages
of acts that can be generated from $A$ by applying the rules and that
satisfy every imperative. The logic belongs, in a broad sense, to the family
of input/output logics~\cite{MvdT00}.

On this basis Evans, Sergot and Stephenson (henceforth ESS) distinguish two
questions~\cite[\S\S2.4, 3.6]{ESS20}. The \emph{operational} question asks
which further acts may or must be performed, given the rules adopted and the
acts already performed; it is answered by $\out$. The \emph{inferential}
question asks which further rules may or must be adopted, given the rules
already adopted. It is answered semantically by an entailment relation
$\entails$, defined by strong equivalence of outcome sets, and syntactically
by the inference rules of their Figure~2. ESS prove that the calculus is
sound (their Proposition~5) and state its completeness as
Conjecture~1, supported by extensive randomized testing.

This paper settles the conjecture negatively and repairs the calculus. Our
contributions are as follows.
\begin{enumerate}
\item A counterexample with four rules over three atoms (Section~\ref{sec:counterexample}):
  $\{\top\may b\vee d,\ b\may c,\ d\may c,\ c\must b\vee d\}$ entails
  $\top\may c$ in the sense of ESS's Definition~5, but $\top\may c$ has no
  derivation. Non-derivability is proved by an invariant over arbitrary
  derivations, not by search.
\item A repair (Section~\ref{sec:rule}). The rule \rn{may-trans} requires,
  for every new act $c$, an imperative $A\wedge c\must b_i$ with a
  \emph{single} intermediary $b_i$. We replace it by
  \rn{may-trans-disj}, which only requires the disjunctive imperative
  $A\wedge c\must b_1\vee\dots\vee b_n$. The original rule is derivable in
  the repaired calculus, which derives strictly more.
\item Soundness for arbitrary sets of rules (Section~\ref{sec:soundness}),
  via a local characterization of entailment that also shows that
  $\entails$ depends only on outcome sets and that the set of consequences
  of $R$ is closed under the inference rules.
\item Completeness for finite sets of rules (Section~\ref{sec:completeness}),
  together with an example showing that, for finite inputs, no sound
  calculus whose derivations are finite can be complete for infinite sets
  of rules.
\item A mechanization in the proof assistant Agda of the main results on
  \klone{} just listed (Section~\ref{sec:mechanization}).
\item An analysis of the consequences for \kltwo{} and \klthree{}
  (Section~\ref{sec:kl2} and Appendix~\ref{app:kl2}). The reduction of
  \kltwo{} to \klone{} used by ESS passes through infinite rule sets. For
  finite rule sets we re-establish it and prove that \kltwo{} with the
  disjunctive rule is sound and complete; these proofs are not mechanized
  and still require independent review.
\end{enumerate}
Section~\ref{sec:philosophy} discusses the philosophical significance of
these results, and Section~\ref{sec:conclusion} concludes.

% ---------------------------------------------------------------------------
\section{Preliminaries}
\label{sec:prelim}
% ---------------------------------------------------------------------------

We recall the definitions of~\cite[\S3]{ESS20}, keeping their notation.

\subsection{Syntax}

Let $\Atoms$ be the set of all atoms; $a,b,c,t$ range over atoms. Bodies and
heads of rules, inputs and outcomes are \emph{finite} sets of atoms, ranged
over by $A,B,C,S,T,X,Y$.\footnote{ESS speak of sets of atoms; finiteness of
bodies and heads is implicit in their reading of bodies as conjunctions and
heads as disjunctions, and inputs are explicitly finite
in~\cite[\S3.2]{ESS20}. The choice of $\Atoms$ is immaterial for our
results; in the Agda formalization $\Atoms$ is the set of natural numbers.}
A \emph{rule} is an imperative $B\must C$ or a permissive $B\may C$, with
\emph{body} $B$ (read as a conjunction) and \emph{head} $C$ (read as a
disjunction). We write $B\anyto C$ for a rule of either kind. Following ESS
we write bodies as conjunctions and heads as disjunctions, $\top$ for the
empty body and $\bot$ for the empty head, and $A\wedge b$ for
$A\cup\{b\}$; thus $p\wedge q\must r\vee s$ is the rule
$\{p,q\}\must\{r,s\}$. Sets $R$ of rules may be infinite unless stated
otherwise.

\subsection{Operational semantics}

\begin{definition}[Satisfaction {\cite[Def.~1]{ESS20}}]
A set $X$ of atoms \emph{satisfies} a rule, written $X\models r$, as follows:
$X\models B\must C$ if $B\not\subseteq X$ or $C\cap X\neq\emptyset$, and
$X\models B\may C$ always. $X\models R$ if $X\models r$ for every $r\in R$;
we then call $X$ a \emph{model} of $R$.
\end{definition}

\begin{definition}[Outcomes {\cite[Def.~2]{ESS20}}]
For a set $R$ of rules and a finite set $A$ of atoms,
\begin{align*}
\stp(R,X) &= \{\,c \mid B\anyto C\in R,\ B\subseteq X,\ c\in C\,\},\\
\cns_0(R,A) &= \{A\},\qquad
\cns_{n+1}(R,A) = \{\,X\cup\{t\}\mid X\in\cns_n(R,A),\ t\in\stp(R,X)\,\},\\
\cns(R,A) &= \textstyle\bigcup_{n\ge 0}\cns_n(R,A),\qquad
\out(R,A) = \{\,X\in\cns(R,A)\mid X\models R\,\}.
\end{align*}
\end{definition}

Thus $X\in\cns(R,A)$ if and only if there is a \emph{trace}
$A=X_0,X_1,\dots,X_n=X$ in which each $X_{k+1}=X_k\cup\{t_k\}$ for some
$t_k\in\stp(R,X_k)$. Both kinds of rules generate acts; only imperatives
constrain outcomes. A step may add an atom that is already present.

\begin{lemma}[Traces]\label{lem:traces}
Let $R\subseteq R'$ be sets of rules and $A,X,Y,Y'$ finite sets of atoms.
\begin{enumerate}[label=(\alph*)]
\item If $X\in\cns(R,A)$ then $A\subseteq X$ and $X\in\cns(R',A)$.
\item If $Y\in\cns(R,A)$ and $X\in\cns(R,Y)$ then $X\in\cns(R,A)$.
\item If $X\in\cns(R,Y)$ and $Y\subseteq Y'\subseteq X$ then $X\in\cns(R,Y')$.
\item $X\models R$ if and only if $X\in\out(R,X)$.
\end{enumerate}
\end{lemma}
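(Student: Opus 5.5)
All four parts reduce to the trace characterization stated just before the lemma. $X\in\cns(R,A)$ holds exactly when there is a finite sequence $A=X_0,\dots,X_n=X$ with $X_{k+1}=X_k\cup\{t_k\}$ and $t_k\in\stp(R,X_k)$. We argue directly on traces, by induction on their length where needed. This equivalence is itself immediate from the definition of $\cns_n$, by induction on $n$.

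For (a), sets only grow along a trace, so $A=X_0\subseteq X_n=X$. Moreover $\stp$ is monotone in the rule set: if $R\subseteq R'$ then $\stp(R,Y)\subseteq\stp(R',Y)$, because any witnessing rule $B\anyto C\in R$ also lies in $R'$. Hence the same sequence is an $R'$-trace.

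For (b), we concatenate traces. Take a trace from $A$ to $Y$ and a trace from $Y$ to $X$. Since the second starts where the first ends, their concatenation is a trace from $A$ to $X$. Formally this is an induction on the length of the second trace, using the recursive clause of $\cns_{n+1}$.

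For (c), the only delicate point is that we need a trace starting at $Y'$ rather than $Y$. Let $Y=X_0,\dots,X_n=X$ be a trace and set $X'_k=X_k\cup Y'$. Then $X'_0=Y'$, and $X'_n=X$ because $Y'\subseteq X$. Also $X'_{k+1}=X'_k\cup\{t_k\}$. Since $X_k\subseteq X'_k$, monotonicity of $\stp$ in its second argument gives $t_k\in\stp(R,X_k)\subseteq\stp(R,X'_k)$: any rule whose body lies in $X_k$ has its body in $X'_k$. So $(X'_k)$ is a trace from $Y'$ to $X$. This also explains the remark that a step may add an atom already present, since some $t_k$ may already lie in $Y'$ and the step then leaves $X'_k$ unchanged.

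For (d), note that $X\in\cns_0(R,X)\subseteq\cns(R,X)$ always. Hence $X\in\out(R,X)$ holds iff $X\models R$, directly by the definition of $\out$.

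None of these steps is a real obstacle. The only place needing care is (c), where the trace is shifted by taking the union with $Y'$ and the side condition $Y'\subseteq X$ guarantees the shifted trace still ends at $X$.
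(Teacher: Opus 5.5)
Your proof is correct and follows the paper's argument: (a) and (b) by monotonicity of $\stp$ in the rule set and concatenation of traces, (c) by shifting the trace to $X_k\cup Y'$ using positivity of bodies (with $Y\subseteq Y'\subseteq X$ fixing both endpoints), and (d) because $X\in\cns_0(R,X)$. Your remark that the shifted trace may contain steps adding an atom already present correctly identifies why the definition must allow such steps.
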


\begin{proof}
(a) and (b) are immediate. For (c), if $Y=X_0,\dots,X_n=X$ is a trace, so is
$X_0\cup Y',\dots,X_n\cup Y'$: every body included in $X_k$ is included in
$X_k\cup Y'$, because bodies are positive, and
$(X_k\cup Y')\cup\{t_k\}=X_{k+1}\cup Y'$. It starts at $Y'$ and ends at
$X\cup Y'=X$. For (d), $X\in\cns_0(R,X)$; this is~\cite[Prop.~2]{ESS20}.
\end{proof}

\subsection{Entailment}

\begin{definition}[Entailment {\cite[Def.~5]{ESS20}}]\label{def:entailment}
A set $R$ of rules \emph{entails} a rule $r$, written $R\entails r$, if
$\out(R,A)=\out(R\cup\{r\},A)$ for every finite set $A$ of atoms. We write
$\kl(R)=\{r\mid R\entails r\}$.
\end{definition}

This is not model-theoretic consequence. For instance, $\emptyset\nentails
\top\may p$, although every set satisfies $\top\may p$: adjoining the
permissive creates the new outcome $\{p\}$ from the input $\emptyset$. ESS
take inputs to be finite~\cite[\S3.2]{ESS20}, while the wording of their
Definition~5 quantifies over all sets of atoms; for finite $R$ the two
readings coincide (Remark~\ref{rem:inputs}), and our counterexample holds
under both.

\subsection{The inference rules}

The inference rules of \klone{} are displayed in Fig.~\ref{fig:rules}. The
set $\deriv(R)$ of rules \emph{derivable} from $R$ is the least set of rules
that contains $R$ and is closed under these rules, where the side condition
of \rn{may-trans} is read as the requirement that the displayed imperatives
$A\wedge c\must b_i$ belong to the set as well; we write $R\derives r$ for
$r\in\deriv(R)$.\footnote{This is the most permissive reading of the side
condition; our non-derivability result holds a fortiori for stricter
readings, and also when the side condition is read semantically
(Remark~\ref{rem:robust}). We read $\subset$ in \rn{must-si}, \rn{may-si}
and \rn{may-so} as proper inclusion; the non-strict reading only adds
trivial instances. Instances of \rn{must-trans} and \rn{may-trans} with
$n=0$ are likewise immaterial: the former are instances of
\rn{quod-libet}, and the latter can only conclude $A\may\bot$, which is
derivable by \rn{may-id} and \rn{may-so}.} Since all heads are finite, every
derivable rule has a derivation, i.e.\ a finite tree of rule applications. In
derivations we tacitly omit applications of \rn{must-si}, \rn{may-si} and
\rn{may-so} whose inclusion is an equality.

\begin{figure}[t]
\hrule\vspace{10pt}
\centering
$\infr{must-id}{-}{A\must A}{A\neq\emptyset}\qquad\quad
 \infr{must-si}{A\must B}{A'\must B}{A\subset A'}$\\[10pt]
$\infr{must-union}{A\must B\quad A\may C}{A\must B\cup C}{}$\\[10pt]
$\infr{must-trans}{A\must b_1\vee\dots\vee b_n\quad A\wedge b_1\must C\quad
 \dots\quad A\wedge b_n\must C}{A\must C}{}$\\[10pt]
$\infr{quod-libet}{A\must\bot}{A\must B}{}$\\[12pt]
$\infr{may-id}{-}{A\may A}{}\qquad\quad
 \infr{may-si}{A\may B}{A'\may B}{A\subset A'}$\\[10pt]
$\infr{may-union}{A\may B\quad A\may C}{A\may B\cup C}{}$\\[10pt]
$\begin{array}{@{}r@{}}
 \infr{may-trans}{A\may b_1\vee\dots\vee b_n\quad A\wedge b_1\may C\quad
 \dots\quad A\wedge b_n\may C}{A\may C}{}\\[6pt]
 \text{if for every $c\in C$, $A\wedge c\must b_i$ for some
 $b_i\in\{b_1,\dots,b_n\}$}
\end{array}$\\[10pt]
$\infr{may-so}{A\may B}{A\may B'}{B'\subset B}$\\[10pt]
$\infr{may-must}{A\must B}{A\may B}{}\qquad\quad
 \infr{may-falsum}{A\wedge b\must\bot}{A\may b}{}$
\caption{Inference rules of \klone{} (reproduced from~\cite[Fig.~2]{ESS20})}
\label{fig:rules}
\vspace{4pt}\hrule
\end{figure}

\begin{conjecture}[{\cite[Conj.~1]{ESS20}}]
\klone{} is complete: $R\entails r$ implies $R\derives r$.
\end{conjecture}

% ---------------------------------------------------------------------------
\section{A Counterexample to the Completeness of \klone}
\label{sec:counterexample}
% ---------------------------------------------------------------------------

Fix three distinct atoms $b,c,d$ and let
\[
\Rz=\{\ \top\may b\vee d,\quad b\may c,\quad d\may c,\quad c\must b\vee d\ \},
\qquad \rz=\top\may c .
\]
In words: one may do $b$ or $d$; if one does $b$ one may do $c$, and
likewise if one does $d$; and whoever does $c$ must do $b$ or $d$. The
candidate consequence $\rz$ says that one may do $c$ outright. The models of
$\Rz$ are the sets $X$ such that $c\in X$ implies $b\in X$ or $d\in X$.

\begin{proposition}[Outcomes of $\Rz$]\label{prop:outcomes}
For every finite set $A$ of atoms,
$\out(\Rz,A)=\out(\Rz\cup\{\rz\},A)=\mathcal{O}(A)$, where
\[
\mathcal{O}(A)=\{\,X \mid A\subseteq X\subseteq A\cup\{b,c,d\},\
\text{and } c\in X \text{ implies } b\in X \text{ or } d\in X\,\}.
\]
In particular $\Rz\entails\rz$.
\end{proposition}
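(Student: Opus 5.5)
We prove the two inclusions $\out(\Rz,A)\subseteq\mathcal{O}(A)$ and $\out(\Rz\cup\{\rz\},A)\subseteq\mathcal{O}(A)$, and the reverse inclusion $\mathcal{O}(A)\subseteq\out(\Rz,A)$. Since $\Rz\subseteq\Rz\cup\{\rz\}$ and $\rz$ is a permissive, a set $X$ is a model of $\Rz\cup\{\rz\}$ exactly when it is a model of $\Rz$. Lemma~\ref{lem:traces}(a) then gives $\out(\Rz,A)\subseteq\out(\Rz\cup\{\rz\},A)$. The three inclusions together yield the equality, and $\Rz\entails\rz$ follows by Definition~\ref{def:entailment}.

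For the upper bound, note that every rule in $\Rz\cup\{\rz\}$ has its head inside $\{b,c,d\}$. Hence $\stp$ only ever adds atoms from $\{b,c,d\}$, and every $X\in\cns(\Rz\cup\{\rz\},A)$ satisfies $A\subseteq X\subseteq A\cup\{b,c,d\}$; the first inclusion is Lemma~\ref{lem:traces}(a). Outcomes are models, and the models of $\Rz$ (equivalently of $\Rz\cup\{\rz\}$) are exactly the sets in which $c$ implies $b$ or $d$. So both outcome sets lie in $\mathcal{O}(A)$.

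For the lower bound, fix $X\in\mathcal{O}(A)$. We build a trace from $A$ to $X$ in $\Rz$ alone, adding the atoms of $X\setminus A$ one at a time in a suitable order.
\begin{enumerate}
\item First add whichever of $b,d$ lie in $X\setminus A$. Each is in $\stp(\Rz,Y)$ for every $Y$, thanks to $\top\may b\vee d$.
\item Then add $c$ if $c\in X\setminus A$. By then some $e\in\{b,d\}$ lies in the current set: either it was in $A$, or it was added in the first stage. This holds because $c\in X$ forces $b\in X$ or $d\in X$. The rule $e\may c$ then puts $c$ in $\stp$.
\end{enumerate}
This gives $X\in\cns(\Rz,A)$, and $X\models\Rz$ by the definition of $\mathcal{O}(A)$, so $X\in\out(\Rz,A)$.

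The only step that needs care is the ordering in the lower bound. We must add $b$ or $d$ before $c$, and we must handle the case where the witness $b$ or $d$ already belongs to $A$; both are covered above. Everything else is direct unfolding of the definitions.
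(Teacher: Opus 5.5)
Your proof is correct and follows essentially the same route as the paper. You bound both outcome sets by noting that all heads lie in $\{b,c,d\}$ and that outcomes satisfy $c\must b\vee d$, and you realize each $X\in\mathcal{O}(A)$ by a trace that first adds the needed atoms among $b,d$ via $\top\may b\vee d$ and then adds $c$ via $b\may c$ or $d\may c$. The only cosmetic difference is that you get $\out(\Rz,A)\subseteq\out(\Rz\cup\{\rz\},A)$ from Lemma~\ref{lem:traces}(a), whereas the paper proves the lower bound for both rule sets at once.
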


\begin{proof}
Let $R$ be $\Rz$ or $\Rz\cup\{\rz\}$. Every step under $R$ adds $b$, $c$ or $d$, so each
$X\in\cns(R,A)$ satisfies $A\subseteq X\subseteq A\cup\{b,c,d\}$; if
moreover $X\models R$, then $X\models c\must b\vee d$, which is the
displayed implication. Hence $\out(R,A)\subseteq\mathcal{O}(A)$.
Conversely, let $X\in\mathcal{O}(A)$. Starting from $A$, use
$\top\may b\vee d$, which is always enabled, to add the elements of
$X\cap\{b,d\}$ that are not in $A$; this reaches
$Y=A\cup(X\cap\{b,d\})$. If $c\in A$ or $c\notin X$, then $Y=X$. Otherwise
$c\in X$, so $b\in X$ or $d\in X$, hence $b\in Y$ or $d\in Y$, and one of
$b\may c$, $d\may c$ adds $c$ to reach $Y\cup\{c\}=X$. In either case
$X\in\cns(\Rz,A)\subseteq\cns(\Rz\cup\{\rz\},A)$, and $X$ satisfies the only
imperative of both rule sets. Hence $\mathcal{O}(A)\subseteq\out(R,A)$.
\end{proof}

The argument applies verbatim to infinite inputs $A$. Note also that $\Rz$
is not inconsistent: $A\cup\{b\}\in\out(\Rz,A)$ for every $A$.

For non-derivability we use one property of derivable imperatives, which is
part of the soundness of \klone{} but is proved directly here.

\begin{lemma}[Derived imperatives are valid]\label{lem:valid}
If $R\derives A\must B$, then $B\cap X\neq\emptyset$ for every model $X$ of
$R$ with $A\subseteq X$.
\end{lemma}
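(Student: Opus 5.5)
We argue by induction on derivations, proving the stronger statement for every rule in $\deriv(R)$: for every derivable $A'\anyto B'$ we only need a property of the imperatives, so we take as invariant the set
\[
\Valid=\{\,A\must B \mid B\cap X\neq\emptyset \text{ for every model } X \text{ of } R \text{ with } A\subseteq X\,\}.
\]
Since $\deriv(R)$ is the least set containing $R$ and closed under the rules of Fig.~\ref{fig:rules}, the plan is to show that the set $S$ consisting of the imperatives in $\Valid$ together with \emph{all} permissives contains $R$ and is closed under every inference rule. Then $\deriv(R)\subseteq S$, and every derivable imperative lies in $\Valid$, which is the claim.

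$R\subseteq S$ holds because each imperative $A\must B$ in $R$ is satisfied by every model $X$ of $R$. If $A\subseteq X$, this gives $B\cap X\neq\emptyset$. Rules whose conclusion is a permissive (\rn{may-id}, \rn{may-si}, \rn{may-union}, \rn{may-trans}, \rn{may-so}, \rn{may-must}, \rn{may-falsum}) preserve $S$ trivially, because all permissives are in $S$. This is also why the side condition of \rn{may-trans} causes no difficulty. It remains to check the four rules concluding an imperative, for a fixed model $X$ of $R$ with $A\subseteq X$ (or $A'\subseteq X$):
\begin{itemize}
\item \rn{must-id}: $A\cap X=A\neq\emptyset$.
\item \rn{must-si}: $A\subset A'\subseteq X$, so the premise gives $B\cap X\neq\emptyset$.
\item \rn{quod-libet}: the premise $A\must\bot$ in $\Valid$ would require $\emptyset\cap X\neq\emptyset$. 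So no model $X$ of $R$ contains $A$, and the conclusion holds vacuously.
\item \rn{must-trans}: from $A\must b_1\vee\dots\vee b_n$ we get some $b_i\in X$. Then $A\wedge b_i\subseteq X$, and the premise $A\wedge b_i\must C$ gives $C\cap X\neq\emptyset$.
\end{itemize}

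The one case needing care is \rn{must-union}, whose premise $A\may C$ carries no information in $S$. This is harmless, though, because the other premise $A\must B$ already gives $B\cap X\neq\emptyset$, hence $(B\cup C)\cap X\neq\emptyset$. So no step is a real obstacle. The only point to watch is that the invariant must not say anything about permissives: a statement such as ``$B\neq\emptyset$'' for permissives would fail for \rn{may-so}, and any operational invariant would drag in the side condition of \rn{may-trans}. Making the invariant trivial on permissives is precisely what makes the induction go through.
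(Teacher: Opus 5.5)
Your proof is correct and follows the paper's argument: an induction over derivations, which you phrase as closure of the set $S$ (valid imperatives together with all permissives) under the inference rules, with the same checks for membership in $R$, \rn{must-id}, \rn{must-si}, \rn{must-union}, \rn{must-trans} and \rn{quod-libet}. Making the invariant trivial on permissives is what the paper does implicitly when it notes that imperatives are concluded only by those rules, so permissive-concluding rules, including \rn{may-trans} and its side condition, need no separate treatment.
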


\begin{proof}
By induction on the derivation. Imperatives are concluded only by membership
in $R$ and by \rn{must-id}, \rn{must-si}, \rn{must-union},
\rn{must-trans} and \rn{quod-libet}. Membership is immediate. For
\rn{must-id}, $A\neq\emptyset$ and $A\subseteq X$. \rn{must-si} and
\rn{must-union} follow from $A\subseteq A'$ and $B\subseteq B\cup C$. For
\rn{must-trans}, the first premise yields some $b_i\in X$, and then the
premise $A\wedge b_i\must C$ applies to $X$. For \rn{quod-libet}, the
premise $A\must\bot$ shows that no model of $R$ includes $A$.
\end{proof}

\begin{lemma}\label{lem:consequences}
\begin{enumerate}[label=(\alph*)]
\item $\Rz\nderives A\must\bot$ for every finite set $A$.
\item If $\Rz\derives A\wedge c\must t$ for an atom $t$, then $t\in A$ or
  $t=c$.
\end{enumerate}
\end{lemma}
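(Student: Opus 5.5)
Both parts follow from Lemma~\ref{lem:valid} by exhibiting suitable models of $\Rz$; no induction over derivations is needed beyond the one already carried out there. Recall that $X\models\Rz$ if and only if $c\in X$ implies $b\in X$ or $d\in X$, since the permissives impose no constraint.

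For (a), suppose $\Rz\derives A\must\bot$ for some finite $A$. Take the model $X=A\cup\{b\}$. It satisfies $\Rz$: if $c\in X$ then $b\in X$. Moreover $A\subseteq X$. By Lemma~\ref{lem:valid} we would need $\bot\cap X\neq\emptyset$, that is, $\emptyset\cap X\neq\emptyset$, which is impossible. Hence no imperative with empty head is derivable from $\Rz$, whatever its body.

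For (b), suppose $\Rz\derives A\wedge c\must t$ with $t\notin A$ and $t\neq c$. We need a model $X$ of $\Rz$ with $A\cup\{c\}\subseteq X$ and $t\notin X$; Lemma~\ref{lem:valid} then gives $\{t\}\cap X\neq\emptyset$, a contradiction. Because $c\in X$, the model must contain $b$ or $d$, and it must avoid $t$.
\begin{itemize}
\item If $t\neq b$, take $X=A\cup\{c,b\}$. Then $t\notin X$, since $t\notin A$, $t\neq c$ and $t\neq b$.
\item If $t=b$, then $t\neq d$ because the atoms are distinct, so take $X=A\cup\{c,d\}$. Then $t=b\notin X$, since $b\notin A$ by assumption and $b\neq c,d$.
\end{itemize}
In both cases $X$ is a model of $\Rz$ including $A\wedge c$ and omitting $t$, as required.

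The only point needing care is this last case split. It is where the distinctness of $b,c,d$ and the disjunctive head of $c\must b\vee d$ are used: no single intermediary atom is forced by $c$, because either $b$ or $d$ can be dropped. That is exactly what (b) records, and it is what will later block the side condition of \rn{may-trans}.
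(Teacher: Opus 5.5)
Your proposal is correct and essentially the paper's proof: both parts apply Lemma~\ref{lem:valid} to the same models, $A\cup\{b\}$ for (a) and $A\cup\{b,c\}$, $A\cup\{c,d\}$ for (b). The paper phrases (b) directly, noting that $t$ must lie in the intersection of the two models, which is $A\cup\{c\}$ by distinctness; your case split on whether $t=b$ is the contrapositive of the same argument.
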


\begin{proof}
(a) $A\cup\{b\}$ is a model of $\Rz$ that includes $A$, and it does not meet
$\emptyset$; apply Lemma~\ref{lem:valid}. (b) $A\cup\{b,c\}$ and
$A\cup\{c,d\}$ are models of $\Rz$ that include $A\cup\{c\}$. By
Lemma~\ref{lem:valid}, $t$ belongs to both, hence to their intersection,
which is $A\cup\{c\}$ because $b$, $c$, $d$ are distinct.
\end{proof}

\begin{proposition}[Invariant]\label{prop:invariant}
Let $A\anyto H$ be a rule of either kind with $\Rz\derives A\anyto H$. If
$c\in H$, then $A\cap\{b,c,d\}\neq\emptyset$.
\end{proposition}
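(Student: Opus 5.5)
Since $\deriv(\Rz)$ is the least set containing $\Rz$ and closed under the rules of Fig.~\ref{fig:rules}, we prove the invariant by induction on derivations. Call a rule $A\anyto H$ \emph{good} if $c\in H$ implies $A\cap\{b,c,d\}\neq\emptyset$. We show that every rule of $\Rz$ is good and that each inference rule preserves goodness. Since the statement is then proved directly, $\rz=\top\may c$, which is not good, will not be derivable.

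\emph{Base cases.} The rules $\top\may b\vee d$ and $c\must b\vee d$ do not have $c$ in the head. The rules $b\may c$ and $d\may c$ have bodies that meet $\{b,c,d\}$.

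\emph{Inference rules other than the transitivity rules.} \rn{must-id} and \rn{may-id} conclude $A\anyto A$: if $c\in A$ then $A$ meets $\{b,c,d\}$. \rn{must-si} and \rn{may-si} enlarge the body, which preserves goodness. \rn{may-so} shrinks the head, which also preserves it. \rn{must-union} and \rn{may-union} keep the body and take a union of heads. If $c\in B\cup C$, then $c$ lies in the head of a premise with the same body $A$, so the premise's goodness gives the claim. \rn{may-must} keeps the body and the head. \rn{quod-libet} and \rn{may-falsum} have premises of the form $A'\must\bot$, which are underivable by Lemma~\ref{lem:consequences}(a); so these rules never fire, and goodness is preserved vacuously.

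\emph{Transitivity rules, the main step.} For \rn{must-trans} and \rn{may-trans}, suppose $c\in C$ and, for contradiction, $A\cap\{b,c,d\}=\emptyset$.
\begin{itemize}
\item \rn{may-trans}: the side condition supplies a derived imperative $A\wedge c\must b_i$ with $b_i$ among the intermediaries. By Lemma~\ref{lem:consequences}(b), $b_i\in A$ or $b_i=c$. If $b_i=c$, the first premise $A\anyto b_1\vee\dots\vee b_n$ contains $c$ in its head, so its goodness gives $A\cap\{b,c,d\}\neq\emptyset$, a contradiction. If $b_i\in A$, then $A\wedge b_i=A$, so the premise $A\wedge b_i\may C$ has body $A$ and $c$ in its head; by the induction hypothesis $A$ meets $\{b,c,d\}$, again a contradiction.
\item \rn{must-trans}: there is no side condition. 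Its first premise $A\must b_1\vee\dots\vee b_n$ is valid by Lemma~\ref{lem:valid}. Take the model $A\cup\{b\}$ or $A$ itself: since $A$ misses $\{b,c,d\}$, the set $A$ is a model of $\Rz$, so some $b_i\in A$. Then $A\wedge b_i\must C$ has body $A$ and $c\in C$, contradicting the induction hypothesis.
\end{itemize}
The obstacle is exactly this step: seeing that Lemma~\ref{lem:consequences}(b) forces the intermediary to be $c$ or to already lie in the body, so that in each case some premise with body $A$ violates the invariant. This requires no restriction on which atoms are used as intermediaries, so the argument is uniform in the derivation depth and in arbitrary intermediate atoms.
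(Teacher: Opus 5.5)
Your proof is correct and follows the paper's argument essentially step for step: induction on derivations, Lemma~\ref{lem:consequences}(a) to rule out \rn{quod-libet} and \rn{may-falsum}, Lemma~\ref{lem:valid} applied to the model $A$ itself in the \rn{must-trans} case, and Lemma~\ref{lem:consequences}(b) to split \rn{may-trans} into the cases $b_i\in A$ and $b_i=c$. The only blemish is the stray phrase ``take the model $A\cup\{b\}$ or $A$ itself'' in the \rn{must-trans} case; only $A$ is needed, and it is a model of $\Rz$ because $c\notin A$.
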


\begin{proof}
By induction on the derivation. No restriction is placed on the atoms that
occur in bodies and heads; derivations may involve atoms other than
$b,c,d$.
\begin{itemize}
\item \emph{Members of $\Rz$.} The heads of $\top\may b\vee d$ and
  $c\must b\vee d$ do not contain $c$; the bodies of $b\may c$ and $d\may c$
  meet $\{b,c,d\}$.
\item \rn{must-id}, \rn{may-id}. The head equals the body, so $c\in A$.
\item \rn{must-si}, \rn{may-si}. The body of the premise is included in $A$.
\item \rn{must-union}, \rn{may-union}. If $c\in B\cup C$, then $c$ is in the
  head of one of the two premises (for \rn{must-union} possibly the
  permissive premise), whose body is $A$.
\item \rn{may-so}, \rn{may-must}. The head of the premise includes $H$ and
  its body is $A$.
\item \rn{quod-libet}, \rn{may-falsum}. Their premises $A\must\bot$ and
  $A\wedge b\must\bot$ are not derivable, by
  Lemma~\ref{lem:consequences}(a).
\item \rn{must-trans}. Suppose $c\in C$ and $c\notin A$. Then $A$ is itself
  a model of $\Rz$, so by Lemma~\ref{lem:valid} the premise
  $A\must b_1\vee\dots\vee b_n$ yields some $b_i\in A$. The induction
  hypothesis for the premise $A\wedge b_i\must C$ gives
  $(A\cup\{b_i\})\cap\{b,c,d\}\neq\emptyset$, and $A\cup\{b_i\}=A$.
\item \rn{may-trans}. Suppose $c\in C$. The side condition provides an $i$
  with $\Rz\derives A\wedge c\must b_i$. By
  Lemma~\ref{lem:consequences}(b), $b_i\in A$ or $b_i=c$. If $b_i\in A$, the
  induction hypothesis for $A\wedge b_i\may C$ applies as in the previous
  case. If $b_i=c$, then $c$ is in the head of the first premise
  $A\may b_1\vee\dots\vee b_n$, and the induction hypothesis for that
  premise applies.\qedhere
\end{itemize}
\end{proof}

\begin{theorem}[Incompleteness of \klone]\label{thm:incompleteness}
We have $\Rz\entails\rz$ but $\Rz\nderives\rz$. Hence Conjecture~1
of~\cite{ESS20}
is false, already for a set of four rules over three atoms.
\end{theorem}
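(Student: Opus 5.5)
The theorem is a direct assembly of the two halves established above. I would first obtain the entailment $\Rz\entails\rz$ from Proposition~\ref{prop:outcomes}. That result shows $\out(\Rz,A)=\out(\Rz\cup\{\rz\},A)=\mathcal{O}(A)$ for every finite $A$, which is exactly Definition~\ref{def:entailment}. Since the proof there also applies verbatim to infinite inputs, the entailment holds under either reading of the quantification over inputs.

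For non-derivability, I would argue by contradiction using Proposition~\ref{prop:invariant}. Suppose $\Rz\derives\rz$, where $\rz=\top\may c$. Its head $\{c\}$ contains $c$, so the invariant forces the body to meet $\{b,c,d\}$. The body is $\top=\emptyset$, and $\emptyset\cap\{b,c,d\}=\emptyset$, which is a contradiction. Hence $\Rz\nderives\rz$.

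Together these show that the entailment $\Rz\entails\rz$ holds without a derivation, which refutes Conjecture~1 of~\cite{ESS20}. The counterexample uses only the four rules of $\Rz$ over the three atoms $b,c,d$.

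There is no genuine obstacle left at this stage, because all the work sits in Proposition~\ref{prop:invariant}, specifically its \rn{may-trans} case. That case relies on Lemma~\ref{lem:consequences}(b) to force the intermediary $b_i$ to lie in $A$ or to equal $c$. The only point to double-check is that the invariant covers derivations whose intermediate rules involve arbitrary atoms, so that it applies to every derivation and not just to those over $\{b,c,d\}$. Its proof explicitly makes no such restriction.
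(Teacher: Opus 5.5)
Your proposal is correct and follows the paper's proof: entailment comes from Proposition~\ref{prop:outcomes}, and non-derivability from applying Proposition~\ref{prop:invariant} to $\rz=\top\may c$, whose empty body cannot meet $\{b,c,d\}$. You are also right that the invariant's proof places no restriction on the atoms occurring in derivations, so it covers every derivation.
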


\begin{proof}
Proposition~\ref{prop:outcomes}, and Proposition~\ref{prop:invariant} applied
to $\rz=\top\may c$, whose body is empty.
\end{proof}

\begin{remark}[Robustness]\label{rem:robust}
(i)~Restricted to empty bodies, Proposition~\ref{prop:invariant} says that
no derivable permissive $\top\may H$ has $c\in H$; this special case, which
suffices for Theorem~\ref{thm:incompleteness}, can also be verified by an
induction restricted to such rules. (ii)~The proof uses the side condition of
\rn{may-trans} only through Lemma~\ref{lem:consequences}(b), that is, only
through the validity of $A\wedge c\must b_i$ in the models of $\Rz$. It
therefore applies equally if the side condition requires membership in the
rule set, or semantic entailment $\Rz\entails A\wedge c\must b_i$ (by
Proposition~\ref{prop:local} below, entailed imperatives are valid).
(iii)~By Proposition~\ref{prop:outcomes}, entailment holds whether inputs
range over finite or arbitrary sets of atoms.
\end{remark}

The intended derivation of $\rz$ is easy to see: $\top\may b\vee d$ offers
$b$ or $d$, and each of them licenses $c$. \rn{may-trans} would chain these
permissions if, in addition, $c\must b$ or $c\must d$ were available. But
neither is entailed: $\{c,d\}$ and $\{b,c\}$ are models of $\Rz$, indeed
outcomes of $\Rz$ from $\emptyset$. Only the disjunctive imperative
$c\must b\vee d$ holds, and the side condition of \rn{may-trans} cannot use
it.\footnote{ESS also remark that a finite set of rules has only finitely
many inferential consequences~\cite[\S3.5]{ESS20}. This presupposes a finite
stock of atoms: \rn{may-id} alone yields $a\may a$ for every atom $a$.}

% ---------------------------------------------------------------------------
\section{The Disjunctive Transitivity Rule}
\label{sec:rule}
% ---------------------------------------------------------------------------

\begin{definition}[\rn{may-trans-disj}]\label{def:disj}
The rule \rn{may-trans-disj} is
\[
\begin{array}{@{}r@{}}
 \infr{may-trans-disj}{A\may b_1\vee\dots\vee b_n\quad A\wedge b_1\may C\quad
 \dots\quad A\wedge b_n\may C}{A\may C}{}\\[6pt]
 \text{if $n\ge 1$ and for every $c\in C$, $A\wedge c\must b_1\vee\dots\vee b_n$}
\end{array}
\]
The calculus \kld{} is obtained from Fig.~\ref{fig:rules} by replacing
\rn{may-trans} with \rn{may-trans-disj}. We write $\derivd(R)$ and
$R\derivesd r$ for derivability in \kld, the side condition again being read
as derivability of the imperatives $A\wedge c\must b_1\vee\dots\vee b_n$.
\end{definition}

The premises are those of \rn{may-trans}; only the side condition changes.
For each $c\in C$ it no longer asks for an imperative leading from $c$ back
to one particular intermediary $b_i$, but for a single imperative whose head
is the disjunction of all intermediaries. As in~\cite{ESS20}, this
disjunctive imperative is one rule with a disjunctive head; it is not a
disjunction ``$A\wedge c\must b_1$ or $\dots$ or $A\wedge c\must b_n$'' of
rules, a form that \klone{} cannot express. Different outcomes containing
$c$ may contain different intermediaries.

\begin{example}\label{ex:repair}
With $A=\emptyset$, $n=2$, $b_1=b$, $b_2=d$ and $C=\{c\}$, the members of
$\Rz$ yield
\[
\dfrac{\;\top\may b\vee d\qquad b\may c\qquad d\may c\;}{\top\may c}
\;\rn{may-trans-disj},
\]
whose side condition is the member $c\must b\vee d$ of $\Rz$. Hence
$\Rz\derivesd\rz$.
\end{example}

\begin{proposition}[Relation to \rn{may-trans}]\label{prop:generalization}
\begin{enumerate}[label=(\alph*)]
\item \rn{may-trans} is a derived rule of \kld: whenever its premises and
  side-condition imperatives belong to $\derivd(R)$, so does its conclusion.
  Consequently $\deriv(R)\subseteq\derivd(R)$ for every $R$, and \kld{}
  derives the same rules as the calculus of Fig.~\ref{fig:rules} extended by
  \rn{may-trans-disj}.
\item The inclusion is strict: $\rz\in\derivd(\Rz)\setminus\deriv(\Rz)$. In
  particular, \rn{may-trans-disj} is not admissible in the original calculus.
\item The instances of \rn{may-trans-disj} with $n=0$, which conclude
  $A\may C$ from $A\may\bot$ and $A\wedge c\must\bot$ for all $c\in C$, are
  derivable in \kld.
\end{enumerate}
\end{proposition}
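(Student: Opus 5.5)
For (a), I will show directly that every instance of \rn{may-trans} can be simulated inside \kld{}. Take premises $A\may b_1\vee\dots\vee b_n$, $A\wedge b_i\may C$ for $i=1,\dots,n$, and, for each $c\in C$, an imperative $A\wedge c\must b_{i(c)}$, all in $\derivd(R)$.

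The case $n=0$ is disposed of as the footnote does. If $C=\emptyset$, then $A\may\bot$ follows from \rn{may-id} and \rn{may-so}. If $C\neq\emptyset$, the side condition cannot be met, so there is no instance to simulate.

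For $n\ge1$, I will weaken each side imperative $A\wedge c\must b_{i(c)}$ to $A\wedge c\must b_1\vee\dots\vee b_n$. This is done by \rn{must-union} with permissive premise $A\wedge c\may b_1\vee\dots\vee b_n$. That permissive premise is obtained from $A\may b_1\vee\dots\vee b_n$ by \rn{may-si} when $c\notin A$; when $c\in A$, the body $A\wedge c$ is just $A$ and no weakening is needed. The result is an imperative with head $\{b_{i(c)}\}\cup\{b_1,\dots,b_n\}=\{b_1,\dots,b_n\}$. Now \rn{may-trans-disj} applies with exactly the same premises and yields $A\may C$.

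An induction on derivations then gives $\deriv(R)\subseteq\derivd(R)$. The second claim in (a) follows by the same reasoning: the extended calculus contains \kld{}'s rules, and conversely its only extra rule, \rn{may-trans}, is derived in \kld{}. So the two calculi have the same closure.

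For (b), I combine Example~\ref{ex:repair}, which gives $\rz\in\derivd(\Rz)$, with Theorem~\ref{thm:incompleteness}, which gives $\rz\notin\deriv(\Rz)$. Non-admissibility follows because all premises and the side condition in Example~\ref{ex:repair} are members of $\Rz$, hence lie in $\deriv(\Rz)$, while the conclusion does not.

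For (c), suppose $A\may\bot$ and $A\wedge c\must\bot$ hold for all $c\in C$. Apply \rn{may-falsum} to each imperative to get $A\may c$; when $c\in A$ the body is $A$ and the rule still applies. Then combine the permissives $A\may c$ by \rn{may-union}; since $C$ is finite, this gives $A\may C$. The edge case $C=\emptyset$ is $A\may\bot$ itself.

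The main obstacle is bookkeeping rather than substance. The proper-inclusion reading of \rn{may-si} requires the case split on $c\in A$. In (a) it also has to be checked that the union of heads is exactly the full disjunction, which holds because $b_{i(c)}$ is one of the intermediaries.
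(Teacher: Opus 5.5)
Your proposal is correct and follows the paper's proof essentially step for step. The steps coincide: you weaken $A\wedge c\must b_i$ to $A\wedge c\must b_1\vee\dots\vee b_n$ via \rn{may-si} and \rn{must-union}, you combine Example~\ref{ex:repair} with Theorem~\ref{thm:incompleteness} for (b), and you use \rn{may-falsum} plus \rn{may-union} for (c). The only difference, that you split first on $n=0$ rather than on $C=\emptyset$, is cosmetic.
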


\begin{proof}
(a) Consider an instance of \rn{may-trans} with $B=\{b_1,\dots,b_n\}$. If
$C=\emptyset$, its conclusion $A\may\bot$ is derivable by \rn{may-id} and
\rn{may-so}. Otherwise the side condition, applied to some $c\in C$, shows
$n\ge1$. Let $c\in C$ and let $A\wedge c\must b_i$ be derivable. From
$A\may B$ we obtain $A\wedge c\may B$ by \rn{may-si} (if $c\in A$ there is
nothing to do), and \rn{must-union} yields
$A\wedge c\must\{b_i\}\cup B$, that is $A\wedge c\must B$. Thus the side
condition of \rn{may-trans-disj} holds, and its premises are those of the
given instance. The remaining claims follow by induction on derivations.
(b)~Example~\ref{ex:repair} and Theorem~\ref{thm:incompleteness}. If the
original calculus were closed under \rn{may-trans-disj}, the instance of
Example~\ref{ex:repair}, whose premises are members of $\Rz$, would put $\rz$
in $\deriv(\Rz)$. (c)~For each $c\in C$, \rn{may-falsum} gives $A\may c$, and
\rn{may-union} combines these, starting from $A\may\bot$.
\end{proof}

The disjunctive rule therefore resolves the counterexample by a single
inference that uses only members of $\Rz$, and it does so without changing
the semantics: $\out$ and $\entails$ are exactly as in~\cite{ESS20}, and only
the proof system is modified. The soundness theorem of the next section
confirms that \rn{may-trans-disj} is valid for the original operational
semantics.

Reasoning through alternative intermediaries is needed because, in $\Rz$,
the act $c$ is always accompanied by an act that licenses it, but not always
by the same one. \rn{may-trans} can route a permission for $c$ only through
an intermediary that is contained in every model of the rules that contains
$A$ and $c$; the models $\{b,c\}$ and $\{c,d\}$ show that no such
intermediary exists.
The proof of completeness in Section~\ref{sec:completeness} shows that this
is the only missing pattern: for every body $A$, the set of acts that can be
added to $A$ in one step serves as an adequate disjunction of
intermediaries.

The repair keeps the protective role of the side condition that ESS discuss
in their \S3.5.1. Without any side condition, the rule would infer $p\may r$
from $p\may q$ and $q\may r$ (``visiting the south of France'' would license
``disrobing entirely'' via ``entering a naturist resort''). With
\rn{may-trans-disj}, this inference still requires the imperative
$p\wedge r\must q$, which is not available. The side condition is weakened
only from ``one $b_i$, contained in every model that contains $A$ and $c$''
to ``in every such model, some $b_i$''. It
is also worth noting that in the original
calculus the fan-out of \rn{may-trans} over several intermediaries is
redundant: since its side condition singles out one $b_i$ for each $c$,
every instance can be replaced by instances with $n=1$ together with
\rn{may-so} and \rn{may-union}. With the disjunctive side condition, the
disjunctive first premise does genuine work.

% ---------------------------------------------------------------------------
\section{Soundness}
\label{sec:soundness}
% ---------------------------------------------------------------------------

\subsection{A local characterization of entailment}

Definition~\ref{def:entailment} compares the outcome sets of $R$ and
$R\cup\{r\}$. The following characterization refers to $R$ alone.

\begin{definition}
For a set $R$ of rules and finite sets $B,C$ of atoms:
\begin{itemize}
\item $\Valid_R(B,C)$ holds if every model $X$ of $R$ with $B\subseteq X$
  satisfies $C\cap X\neq\emptyset$;
\item $\Perm_R(B,C)$ holds if $\out(R,Y\cup\{c\})\subseteq\out(R,Y)$ for
  every finite $Y\supseteq B$ and every $c\in C$.
\end{itemize}
\end{definition}

$\Perm_R(B,C)$ says that at any stage that includes $B$, adding an atom of
$C$ outright never produces an outcome that could not have been produced
anyway.

\begin{proposition}[Local characterization]\label{prop:local}
For every set $R$ of rules, finite or infinite, and all finite sets $B,C$:
\begin{enumerate}[label=(\alph*)]
\item $R\entails B\may C$ if and only if $\Perm_R(B,C)$;
\item $R\entails B\must C$ if and only if $\Valid_R(B,C)$ and $\Perm_R(B,C)$.
\end{enumerate}
\end{proposition}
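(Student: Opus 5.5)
We compare traces under $R$ and $R'=R\cup\{r\}$ directly, using Lemma~\ref{lem:traces} throughout. The key observation is that $r$ adds no new constraints when $r$ is a permissive, so $X\models R'$ iff $X\models R$. Hence $\out(R,A)=\out(R',A)$ reduces to showing that every model of $R$ in $\cns(R',A)$ already lies in $\cns(R,A)$. The reverse inclusion is automatic by Lemma~\ref{lem:traces}(a). When $r=B\must C$, we have $\out(R',A)=\out(R,A)\cap\{X\mid X\models B\must C\}$ restricted to $\cns(R',A)$, so equality additionally needs every outcome of $R$ to satisfy $B\must C$.

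For (a), $(\Leftarrow)$: assume $\Perm_R(B,C)$ and let $X\in\out(R',A)$ with trace $A=X_0,\dots,X_n=X$. We prove by induction on $n$, or rather by strong induction from the end, that $X\in\out(R,X_k)$ for every $k$. At $k=n$ this is Lemma~\ref{lem:traces}(d). Going from $k+1$ to $k$: if the step uses a rule of $R$, then Lemma~\ref{lem:traces}(b) gives $X\in\cns(R,X_k)$. If the step uses $r$, then $B\subseteq X_k$, $X_{k+1}=X_k\cup\{c\}$ with $c\in C$, and $\Perm_R(B,C)$ with $Y=X_k$ yields $X\in\out(R,X_{k+1})\subseteq\out(R,X_k)$. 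At $k=0$ this gives $X\in\out(R,A)$.

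$(\Rightarrow)$: given finite $Y\supseteq B$, $c\in C$ and $X\in\out(R,Y\cup\{c\})$, the step $Y\to Y\cup\{c\}$ is legal under $r$. Prepending it to the $R$-trace shows $X\in\cns(R',Y)$. Since $X\models R$, we get $X\in\out(R',Y)=\out(R,Y)$ by entailment.

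For (b), $(\Leftarrow)$ runs the same trace argument; the extra constraint from $B\must C$ costs nothing for $\out(R',A)\subseteq\out(R,A)$. For the other inclusion, take $X\in\out(R,A)$; then $X\in\cns(R',A)$ and $X\models R$, so $\Valid_R(B,C)$ gives $X\models B\must C$. $(\Rightarrow)$: $\Perm$ follows as in (a), now also using that $X$ satisfies $B\must C$. For $\Valid$, the point to watch is that $X$ need only be a model, not an outcome from some fixed input. Lemma~\ref{lem:traces}(d) resolves this: for a model $X$ of $R$ with $B\subseteq X$, we have $X\in\out(R,X)=\out(R',X)$, so $X\models B\must C$. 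This assumes $X$ is finite, which is fine if models are taken finite as in the definition of $\Valid$; otherwise this is the step where care is needed.

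The main obstacle is the $\Perm$ direction in (b). Here we must check that $X\in\out(R,Y\cup\{c\})$ also satisfies $B\must C$, and it does, because $c\in X\cap C$ and $B\subseteq Y\subseteq X$.
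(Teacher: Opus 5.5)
Your proof is correct and is essentially the paper's. Both use downward induction along an $R\cup\{r\}$-trace, invoking $\Perm_R(B,C)$ at the steps that use $r$; prepend the $r$-step $Y\to Y\cup\{c\}$ for the permission half of ($\Rightarrow$); and use Lemma~\ref{lem:traces}(d) for $\Valid_R(B,C)$. The one difference is in the permission half of (b): you get $X\models B\must C$ directly from $c\in X\cap C$ and $B\subseteq X$, whereas the paper derives it from the already established $\Valid_R(B,C)$, and both work.
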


\begin{proof}
Let $r$ be $B\may C$ or $B\must C$ and $R^+=R\cup\{r\}$.

($\Rightarrow$) Let $r$ be an imperative and $X$ a model of $R$ with
$B\subseteq X$. By Lemma~\ref{lem:traces}(d), $X\in\out(R,X)=\out(R^+,X)$,
so $X\models r$ and $C\cap X\neq\emptyset$. This proves $\Valid_R(B,C)$. For
$\Perm_R(B,C)$, let $Y\supseteq B$, $c\in C$ and
$X\in\out(R,Y\cup\{c\})$. Since $c\in\stp(R^+,Y)$,
Lemma~\ref{lem:traces}(a,b) gives $X\in\cns(R^+,Y)$. Moreover $X\models r$:
trivially if $r$ is a permissive, and by $\Valid_R(B,C)$, just proved, if
$r$ is an imperative. Hence $X\in\out(R^+,Y)=\out(R,Y)$.

($\Leftarrow$) By Lemma~\ref{lem:traces}(a), $\out(R,A)\subseteq\out(R^+,A)$,
using $\Valid_R(B,C)$ when $r$ is an imperative. Conversely, let
$X\in\out(R^+,A)$ have the trace $A=X_0,\dots,X_n=X$ under $R^+$. Then
$X\models R$. We show by downward induction on $k$ that
$X\in\cns(R,X_k)$, the case $k=n$ being trivial. If the step from $X_k$ to
$X_{k+1}$ uses a rule of $R$, then $X\in\cns(R,X_k)$ by
Lemma~\ref{lem:traces}(b). If it uses $r$, then $B\subseteq X_k$ and
$X_{k+1}=X_k\cup\{t\}$ with $t\in C$, so
$X\in\out(R,X_k\cup\{t\})\subseteq\out(R,X_k)$ by $\Perm_R(B,C)$. For $k=0$
we obtain $X\in\out(R,A)$.
\end{proof}

\begin{corollary}\label{cor:invariance}
Whether $R\entails r$ holds depends only on the function
$A\mapsto\out(R,A)$: if $\out(R,A)=\out(R',A)$ for all $A$, then
$\kl(R)=\kl(R')$. In particular $\kl(R\cup\{r\})=\kl(R)$ whenever
$R\entails r$.
\end{corollary}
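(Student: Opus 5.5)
The plan is to read both claims off Proposition~\ref{prop:local}, by checking that the two local predicates $\Valid_R$ and $\Perm_R$ are themselves determined by the function $A\mapsto\out(R,A)$. For $\Perm_R(B,C)$ this is immediate from its definition, since it quantifies only over inclusions $\out(R,Y\cup\{c\})\subseteq\out(R,Y)$ between outcome sets. For $\Valid_R(B,C)$ the key observation is Lemma~\ref{lem:traces}(d): for a finite set $X$, we have $X\models R$ if and only if $X\in\out(R,X)$. So the finite models of $R$ are exactly the finite $X$ with $X\in\out(R,X)$, and these are determined by the outcome function.

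The obstacle is that $\Valid_R$ quantifies over all models $X$, possibly infinite, while outcomes are finite sets. I would first show that $\Valid_R(B,C)$ is equivalent to its restriction to finite models. Suppose $X$ is an infinite model of $R$ with $B\subseteq X$ and $C\cap X=\emptyset$. A finite subset $X'\supseteq B$ of $X$ need not be a model: a rule with body inside $X'$ might need a head atom outside $X'$, and for infinite $R$ there may be infinitely many such rules. So this reduction can fail for infinite $R$. I therefore expect to bypass the question entirely, by noting that the ($\Rightarrow$) direction in the proof of Proposition~\ref{prop:local} only ever applied validity to models that are outcomes, namely to $X\in\out(R,\cdot)$. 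Concretely, I will prove directly that $R\entails r$ is invariant. Given $\out(R,A)=\out(R',A)$ for all finite $A$, I compare $\out(R\cup\{r\},A)$ with $\out(R'\cup\{r\},A)$. Rather than rely on the full Proposition~\ref{prop:local}, I will use the characterization: $R\entails B\must C$ iff $\Perm_R(B,C)$ and every $X\in\out(R,A)$ with $B\subseteq X$ meets $C$, for all finite $A$. The ($\Leftarrow$) argument in the proof of Proposition~\ref{prop:local} uses validity only for elements of $\out(R,A)$, and the ($\Rightarrow$) argument establishes it for all models, in particular for outcomes. With this variant, both conditions depend only on the outcome function, so $R\entails r\iff R'\entails r$ for every rule $r$, and hence $\kl(R)=\kl(R')$.

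The ``in particular'' clause then follows directly: if $R\entails r$, then by Definition~\ref{def:entailment} $\out(R\cup\{r\},A)=\out(R,A)$ for every finite $A$. Applying the first part with $R'=R\cup\{r\}$ gives $\kl(R\cup\{r\})=\kl(R)$.

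The main step to check carefully is the outcome-restricted variant of validity. I must confirm that the ($\Leftarrow$) proof of Proposition~\ref{prop:local} needs $X\models r$ only for $X\in\out(R,A)$. This holds, because the inclusion $\out(R,A)\subseteq\out(R^+,A)$ and the final step both concern outcomes of $R$.
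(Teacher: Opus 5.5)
Your proposal is correct, and its first paragraph already is the paper's proof. $\Perm_R$ is defined from $\out(R,\cdot)$, and by Lemma~\ref{lem:traces}(d), $\Valid_R(B,C)$ says that $C$ meets every $X\supseteq B$ with $X\in\out(R,X)$; Proposition~\ref{prop:local} then gives the claim. The obstacle you raise afterwards does not arise, because the paper lets $X$ range over finite sets, so every model is finite and covered by (d). Your outcome-restricted variant of Proposition~\ref{prop:local} is nonetheless valid, since the ($\Leftarrow$) direction uses validity only for members of $\out(R,A)$, which are models; it simply makes the argument independent of that convention.
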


\begin{proof}
$\Perm_R$ is defined from $\out(R,\cdot)$, and by
Lemma~\ref{lem:traces}(d), $\Valid_R(B,C)$ says that $C$ meets every $X$
with $B\subseteq X$ and $X\in\out(R,X)$.
\end{proof}

\subsection{Soundness of the repaired calculus}

\begin{theorem}[Soundness]\label{thm:soundness}
Let $R$ and $Q$ be sets of rules, finite or infinite, such that
$Q\subseteq\kl(R)$. If $Q\derivesd r$, then $R\entails r$. In particular,
$R\derivesd r$ implies $R\entails r$, and $\kl(R)$ is closed under the
inference rules of \kld{} and of Fig.~\ref{fig:rules}.
\end{theorem}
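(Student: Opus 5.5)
The plan is to prove that $\kl(R)$ contains $R$ and is closed under every inference rule of \kld{}, with side conditions read inside $\kl(R)$. Since $\derivd(Q)$ is the least set containing $Q$ and closed under these rules, $Q\subseteq\kl(R)$ then gives $\derivd(Q)\subseteq\kl(R)$. Taking $Q=R$ (note $R\subseteq\kl(R)$ trivially, since $R\cup\{r\}=R$ for $r\in R$) gives the second claim. Closure under the rules of Fig.~\ref{fig:rules} follows as well: \rn{may-trans} was shown derivable from \rn{may-trans-disj} in Proposition~\ref{prop:generalization}(a), and its side-condition imperatives lie in $\kl(R)$ whenever they are used, which is the reading adopted there. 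Throughout I work with the local characterization of Proposition~\ref{prop:local}: a permissive $B\may C$ is in $\kl(R)$ iff $\Perm_R(B,C)$, and an imperative $B\must C$ is in $\kl(R)$ iff $\Valid_R(B,C)$ and $\Perm_R(B,C)$.

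Most rules are routine checks.
\begin{itemize}
\item \rn{must-id} and \rn{may-id}: $\Valid_R(A,A)$ holds for nonempty $A$. For $\Perm_R(A,A)$, if $Y\supseteq A$ and $c\in A$, then $Y\cup\{c\}=Y$.
\item \rn{must-si}, \rn{may-si} and \rn{may-so}: both $\Valid_R$ and $\Perm_R$ are antitone in the body, and $\Perm_R$ is antitone in the head while $\Valid_R$ is monotone in the head.
\item \rn{must-union} and \rn{may-union}: $\Perm_R(A,B\cup C)$ is the conjunction of $\Perm_R(A,B)$ and $\Perm_R(A,C)$, and $\Valid_R$ is monotone in the head.
\item \rn{may-must}: this is immediate from the characterization.
\item \rn{quod-libet} and \rn{may-falsum}: $\Valid_R(A,\emptyset)$ says that no model of $R$ includes $A$. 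Then $\out(R,Y)=\emptyset$ for every $Y\supseteq A$, because outcomes are models including $Y$ (Lemma~\ref{lem:traces}(a)). So both $\Valid_R(A,B)$ and every $\Perm_R$ claim with body $\supseteq A$ hold vacuously.
\item \rn{must-trans}: for $\Valid_R$, a model $X\supseteq A$ contains some $b_i$, and then the premise $A\wedge b_i\must C$ gives $C\cap X\neq\emptyset$. For $\Perm_R(A,C)$, take $X\in\out(R,Y\cup\{c\})$ with $Y\supseteq A$. By $\Valid_R$ of the first premise, $X$ contains some $b_i$. This case is then the same as the \rn{may-trans-disj} argument below, because the first premise also gives $\Perm_R(A,\{b_1,\dots,b_n\})$.
\end{itemize}

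The main obstacle is \rn{may-trans-disj}. Fix $Y\supseteq A$, $c\in C$ and $X\in\out(R,Y\cup\{c\})$; we must show $X\in\out(R,Y)$. Since $X$ is a model of $R$ containing $A\cup\{c\}$, the side condition $\Valid_R(A\wedge c,\{b_1,\dots,b_n\})$ yields some $b_i\in X$. My first attempt is to pass through the intermediate stage $Y\cup\{b_i\}$, which is why we need $b_i\in X$. By Lemma~\ref{lem:traces}(c), $X\in\cns(R,Y\cup\{c\})$ and $Y\cup\{c\}\subseteq Y\cup\{c,b_i\}\subseteq X$ give $X\in\out(R,Y\cup\{b_i,c\})$. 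The premise $\Perm_R(A\wedge b_i,C)$, applied at stage $Y\cup\{b_i\}\supseteq A\cup\{b_i\}$, gives $X\in\out(R,Y\cup\{b_i\})$. The first premise $\Perm_R(A,\{b_1,\dots,b_n\})$ at stage $Y$ then gives $X\in\out(R,Y)$. The subtle points are that the stages must be finite, which holds since $Y$ is finite, and that Lemma~\ref{lem:traces}(c) is applied to traces of $R$ alone and preserves model-hood of $X$. The case $b_i\in Y$ or $c\in Y$ needs no separate treatment, since the sets then simply coincide.
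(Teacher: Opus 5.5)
Your proposal is correct and follows the paper's proof. Like the paper, you use Proposition~\ref{prop:local} to show that $\kl(R)$ contains $Q$ and is closed under each rule, and your \rn{must-trans}/\rn{may-trans-disj} cases pass through Lemma~\ref{lem:traces}(c) at the stage $Y\cup\{b_i\}$ exactly as the paper does; the only difference is that you obtain \rn{may-trans} from Proposition~\ref{prop:generalization}(a) instead of repeating the argument with the singleton side condition.

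There are two small slips, both easily fixed. First, validity and permission are preserved by \emph{enlarging} the body, which is what \rn{must-si} and \rn{may-si} need; under your head convention that is ``monotone'', not ``antitone''. Second, for \rn{may-falsum} the vacuity comes from $Y\cup\{b\}\supseteq A\cup\{b\}$ for $Y\supseteq A$, because the conclusion's body $A$ does not itself include the excluded set $A\cup\{b\}$.
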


\begin{proof}
Call $r$ \emph{local} if $\Perm_R(B,C)$ holds when $r$ is $B\may C$, and both
$\Valid_R(B,C)$ and $\Perm_R(B,C)$ hold when $r$ is $B\must C$. By
Proposition~\ref{prop:local} the members of $Q$ are local, and it suffices to
show that every inference rule of \kld{} and of Fig.~\ref{fig:rules}
preserves locality. Several cases are direct: $\Valid_R(B,C)$ and
$\Perm_R(B,C)$ are preserved by enlarging $B$ (\rn{must-si}, \rn{may-si});
$\Perm_R(B,C)$ is preserved by shrinking $C$ (\rn{may-so}) and is pointwise
in $c\in C$ (\rn{must-union}, \rn{may-union}); $\Valid_R(A,B)$ implies
$\Valid_R(A,B\cup C)$ (\rn{must-union}); \rn{may-must} forgets validity. For
\rn{must-id} and \rn{may-id}, $c\in A\subseteq Y$ gives $Y\cup\{c\}=Y$, and
$A\neq\emptyset$ yields validity. Every $X\in\out(R,Y\cup\{c\})$ with
$A\subseteq Y$ is a model of $R$ that includes $A\cup\{c\}$. If the premise
$A\must\bot$ of \rn{quod-libet} is local, no model of $R$ includes $A$, so
the conclusion $A\must B$ is vacuously local; likewise the premise
$A\wedge b\must\bot$ of \rn{may-falsum} excludes models that include
$A\cup\{b\}$, so its conclusion $A\may b$ is vacuously local.

For \rn{must-trans}, validity is proved as in Lemma~\ref{lem:valid}. For
permission, let $Y\supseteq A$, $c\in C$ and $X\in\out(R,Y\cup\{c\})$.
Since $A\subseteq X$ and $X\models R$, the validity of
$A\must b_1\vee\dots\vee b_n$ gives some $b_i\in X$. By
Lemma~\ref{lem:traces}(c), $X\in\out(R,(Y\cup\{b_i\})\cup\{c\})$; the
permission part of $A\wedge b_i\must C$ at the stage $Y\cup\{b_i\}$ gives
$X\in\out(R,Y\cup\{b_i\})$; and the permission part of the first premise at
$Y$ gives $X\in\out(R,Y)$.

For \rn{may-trans-disj}, let $Y\supseteq A$, $c\in C$ and
$X\in\out(R,Y\cup\{c\})$. Now $A\cup\{c\}\subseteq X$, and the validity of
the side-condition imperative $A\wedge c\must b_1\vee\dots\vee b_n$ gives
some $b_i\in X$. The argument is then the same as for \rn{must-trans}, using
the premises $A\wedge b_i\may C$ and $A\may b_1\vee\dots\vee b_n$. The case of
\rn{may-trans} is identical, with the singleton side condition
$A\wedge c\must b_i$.
\end{proof}

The proof uses neither finiteness of $R$ nor the definite-program semantics
of~\cite[\S3.3]{ESS20}. The only property of traces it needs is
Lemma~\ref{lem:traces}(c): an intermediary $b_i$ that belongs to the final
outcome may be assumed from the start, because bodies are positive.

\begin{remark}[Soundness via least models]\label{rem:least-models}
For finite $R$, the soundness of a single application of
\rn{may-trans-disj} whose premises belong to $R$ can also be verified with
the least-model characterization of $\out$~\cite[Prop.~1]{ESS20}. Passing
from single inferences to derivations along the lines of the proof
of~\cite[Prop.~5]{ESS20} then requires that the outcome sets of $R$ and
$R\cup\{r_1,r_2\}$ coincide when $r_1$ and $r_2$ are separately entailed,
which the inductive step of that proof takes for granted. This
compositionality is not proved in~\cite{ESS20}; it follows from
Corollary~\ref{cor:invariance}, and Theorem~\ref{thm:soundness} avoids it
altogether. Moreover, the least-model characterization fails for some
infinite rule sets (Remark~\ref{rem:infinite}), so this approach does not
extend to infinite $R$, whereas Theorem~\ref{thm:soundness} does.
\end{remark}

% ---------------------------------------------------------------------------
\section{Completeness}
\label{sec:completeness}
% ---------------------------------------------------------------------------

Throughout this section $R$ is a finite set of rules. Let $H$ be the union of
the heads of the rules in $R$, a finite set, and for a finite set $S$ let
$\mu(S)=|H\setminus S|$. The \emph{successors} of a finite set $A$ are the
atoms that can be added to $A$ in one step:
\[
\nx(A)=\stp(R,A)\setminus A=\{\,t\notin A\mid B\anyto C\in R,\ B\subseteq A,\
t\in C\,\}\subseteq H .
\]
Adding a successor, or any element of $H\setminus S$, to $S$ decreases
$\mu$; this is the measure for all inductions below.

\begin{lemma}[Reachable models]\label{lem:reach}
If $X$ is a model of $R$ and $S\subseteq X$, then there is a model $T$ of
$R$ with $S\subseteq T\subseteq X$ and $T\in\cns(R,S)$, so that
$T\in\out(R,S)$.
\end{lemma}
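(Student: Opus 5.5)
We construct $T$ greedily inside $X$ by a trace that starts at $S$ and only ever adds atoms of $X$. The process stops as soon as the current stage is a model of $R$. The induction is on the measure $\mu$, which is available because $R$ is finite.

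\emph{Step 1 (induction).} We prove, by well-founded induction on $\mu(S)$, that for every finite $S\subseteq X$ there is a $T$ with $S\subseteq T\subseteq X$, $T\in\cns(R,S)$ and $T\models R$.

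\emph{Step 2 (the case where $S$ is a model).} If $S\models R$, take $T=S$. We have $S\in\cns_0(R,S)$.

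\emph{Step 3 (the case where $S$ is not a model).} Otherwise some imperative $B\must C\in R$ is violated by $S$, that is, $B\subseteq S$ and $C\cap S=\emptyset$. Since $S\subseteq X$, we get $B\subseteq X$. Since $X\models R$, some $t\in C\cap X$ exists. Then $t\notin S$, and $t\in\stp(R,S)$ because $B\subseteq S$. Hence $t\in\nx(S)$, so $t\in H\setminus S$, and therefore $\mu(S\cup\{t\})<\mu(S)$. Moreover $S':=S\cup\{t\}\subseteq X$ is finite and belongs to $\cns_1(R,S)$. The induction hypothesis gives a model $T$ with $S'\subseteq T\subseteq X$ and $T\in\cns(R,S')$. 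Lemma~\ref{lem:traces}(b) then yields $T\in\cns(R,S)$, and clearly $S\subseteq T$.

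\emph{Step 4 (conclusion).} Finally $T\in\cns(R,S)$ and $T\models R$, so $T\in\out(R,S)$ by the definition of $\out$.

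The only delicate point is well-foundedness. A violated imperative always supplies a \emph{new} atom of $X$ lying in $H$, and this is what makes $\mu$ decrease strictly; this uses finiteness of $R$, through finiteness of $H$. Finiteness of $X$ is not needed, because each stage is $S$ together with finitely many atoms. For infinite $R$ the argument fails, and that matches the later remark that completeness cannot extend to infinite rule sets. Permissives play no role here, since they are never violated.
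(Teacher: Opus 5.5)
Your proof is correct and follows the paper's argument exactly. You induct on $\mu(S)$, take $T=S$ when $S\models R$, and otherwise extend $S$ by an atom $t\in C\cap X$ from a violated imperative $B\must C$, finishing with the induction hypothesis and Lemma~\ref{lem:traces}(b).

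One refinement of your closing aside: finiteness of $R$ or of $X$ alone suffices, since for finite $X$ the measure $|X\setminus S|$ also decreases. The set $R_\infty$ with $X=\Atoms$ from Remark~\ref{rem:infinite} shows that the lemma fails when both are dropped.
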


\begin{proof}
By induction on $\mu(S)$. If $S\models R$, take $T=S$. Otherwise some
imperative $B\must C\in R$ has $B\subseteq S$ and $C\cap S=\emptyset$. Since
$B\subseteq X$ and $X\models R$, there is $t\in C\cap X$. Then
$S\cup\{t\}\in\cns(R,S)$, $S\cup\{t\}\subseteq X$ and
$\mu(S\cup\{t\})<\mu(S)$; the induction hypothesis and
Lemma~\ref{lem:traces}(b) give $T$.
\end{proof}

\begin{lemma}[Successors]\label{lem:succ}
Let $A$ be a finite set of atoms.
\begin{enumerate}[label=(\alph*)]
\item $R\derivesd A\may\nx(A)$.
\item If $T\in\cns(R,A)$ and $T\not\subseteq A$, then $T\cap\nx(A)\neq\emptyset$.
\end{enumerate}
\end{lemma}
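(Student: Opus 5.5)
For (a), I will build $A\may\nx(A)$ in \kld{} by combining, with \rn{may-union}, a permissive $A\may\{t\}$ for each successor $t$. I start from $A\may\bot$, which comes from \rn{may-id} and \rn{may-so}; this also handles the case $\nx(A)=\emptyset$. Now fix $t\in\nx(A)$. By definition of $\nx(A)$ there is a rule $B\anyto C\in R$ with $B\subseteq A$, $t\in C$ and $t\notin A$.
\begin{itemize}
\item If the rule is a permissive $B\may C$, it is a member of $R$.
\item If it is an imperative $B\must C$, then \rn{may-must} turns it into $B\may C$.
\end{itemize}
In either case \rn{may-si} (when $B\subset A$) gives $A\may C$, and \rn{may-so} (when $\{t\}\subset C$) gives $A\may t$. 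Since $\nx(A)$ is a finite subset of $H$, finitely many applications of \rn{may-union} then yield $A\may\nx(A)$. This uses only rules shared with Fig.~\ref{fig:rules}, so the statement holds even in the original calculus.

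For (b), I will take a trace $A=X_0,\dots,X_n=T$ with $X_{k+1}=X_k\cup\{t_k\}$ and $t_k\in\stp(R,X_k)$. Because $T\not\subseteq A$, some step adds an atom outside $A$. Let $k$ be the least index with $t_k\notin A$. All earlier steps add atoms already in $A$, so $X_j=A$ for every $j\le k$ by induction on $j$. In particular $X_k=A$, hence $t_k\in\stp(R,A)\setminus A=\nx(A)$. Finally $t_k\in X_{k+1}\subseteq T$ by Lemma~\ref{lem:traces}(a), so $t_k\in T\cap\nx(A)$.

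Neither part looks hard. The only point needing care is in (b): a step may add an atom that is already present, which is why I choose the first step adding a genuinely new atom rather than simply the first step. In (a) the only subtlety is the degenerate case $\nx(A)=\emptyset$, which the starting permissive $A\may\bot$ covers.
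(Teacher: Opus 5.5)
Your proof is correct and follows the paper's argument essentially step for step. In (a) you obtain permissives from the applicable rules by \rn{may-must}, \rn{may-si} and \rn{may-so} and combine them with \rn{may-union} starting from $A\may\bot$; in (b) you take the first step of the trace that leaves $A$. The only cosmetic difference is that you form one permissive $A\may t$ per successor $t$, whereas the paper forms one permissive $A\may C\setminus A$ per applicable rule $B\anyto C$; both are finite unions, so either works.
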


\begin{proof}
(a) For each $B\anyto C\in R$ with $B\subseteq A$ we have $B\may C$ (by
\rn{may-must} if the rule is an imperative), hence $A\may C$ by
\rn{may-si} and $A\may C\setminus A$ by \rn{may-so}. Starting from $A\may\bot$
(by \rn{may-id}, and \rn{may-so} if $A\neq\emptyset$), \rn{may-union}
combines these finitely many permissives into $A\may\nx(A)$. (b) In a trace
$A=X_0,\dots,X_n=T$, let $k$ be least with $X_{k+1}\not\subseteq A$. Then
$X_k=A$, and the atom added at that step lies in $\stp(R,A)\setminus A$ and
in $T$.
\end{proof}

\begin{lemma}[Imperatives from permissives]\label{lem:must}
If $R\derivesd S\may D$ and $\Valid_R(S,D)$, then $R\derivesd S\must D$.
\end{lemma}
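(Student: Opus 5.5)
The plan is to prove the statement by induction on $\mu(S)$, for fixed $D$ and all finite $S$, using only the imperative rules together with \rn{may-si} and \rn{must-union}. The given permissive $S\may D$ will be used exactly once, at the end of each branch, to adjust the head of an imperative. Validity is what forces the construction to terminate in one of those branches.

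\emph{Case 1: $S\cap D\neq\emptyset$.} Pick $d\in S\cap D$.
\begin{itemize}
\item \rn{must-id} gives $d\must d$, since its body $\{d\}$ is nonempty.
\item \rn{must-si} gives $S\must d$ (tacitly omitted if $S=\{d\}$).
\item \rn{must-union} with the hypothesis $R\derivesd S\may D$ gives $S\must\{d\}\cup D$, which is $S\must D$.
\end{itemize}

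\emph{Case 2: $S\cap D=\emptyset$.} First, $S$ cannot be a model of $R$. Otherwise $\Valid_R(S,D)$, applied to $X=S$, would give $D\cap S\neq\emptyset$. So some imperative $B\must C\in R$ has $B\subseteq S$ and $C\cap S=\emptyset$, and from the member $B\must C$, \rn{must-si} gives $S\must C$.
\begin{itemize}
\item If $C=\emptyset$, \rn{quod-libet} turns $S\must\bot$ into $S\must D$.
\item Otherwise, write $C=\{t_1,\dots,t_m\}$ and aim at \rn{must-trans} with first premise $S\must t_1\vee\dots\vee t_m$. The remaining premises $S\wedge t_j\must D$ come from the induction hypothesis applied to $S'=S\cup\{t_j\}$.
\end{itemize}

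The hypotheses of the lemma transfer to each $S'=S\cup\{t_j\}$ as follows.
\begin{itemize}
\item \rn{may-si} gives $R\derivesd S'\may D$, since $S\subset S'$ properly because $t_j\notin S$.
\item $\Valid_R(S',D)$ follows from $\Valid_R(S,D)$, because every model including $S'$ includes $S$.
\item The measure drops: $t_j\in C\subseteq H$ and $t_j\notin S$, so $\mu(S')<\mu(S)$.
\end{itemize}
The induction hypothesis then yields $R\derivesd S\wedge t_j\must D$ for every $j$, and \rn{must-trans} concludes $S\must D$.

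The base case needs no separate treatment. When $\mu(S)=0$, every head atom lies in $S$, so no imperative can have a head disjoint from $S$ and nonempty. Hence Case 2 either is impossible or ends in the \rn{quod-libet} branch, which needs no induction hypothesis.

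The only delicate point I expect is Case 2: one must see that validity rules out $S$ being a model, so that a violated imperative of $R$ exists. Its head must also consist of atoms of $H$ outside $S$, so that the measure decreases. Finiteness of $R$ enters only through the finiteness of $H$, which makes $\mu$ well defined. The argument does not use \rn{may-trans-disj}, which is presumably needed elsewhere in the completeness proof.
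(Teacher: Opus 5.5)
Your proof is correct and follows the paper's own argument: induction on $\mu(S)$, using \rn{must-id}, \rn{must-si} and \rn{must-union} when $S$ meets $D$, and otherwise taking an imperative of $R$ violated by $S$ and applying \rn{quod-libet} or \rn{must-trans} with the induction hypothesis. The only difference is cosmetic: you split on $S\cap D\neq\emptyset$ and use validity to show that $S$ is not a model, whereas the paper splits on $S\models R$ and uses validity to obtain $S\cap D\neq\emptyset$.
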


\begin{proof}
By induction on $\mu(S)$. If $S\models R$, then $\Valid_R(S,D)$ applied to
$X=S$ gives some $t\in D\cap S$. By \rn{must-id} and \rn{must-si},
$S\must t$, and \rn{must-union} with $S\may D$ gives
$S\must\{t\}\cup D$, that is $S\must D$. Otherwise some $B\must C\in R$ has
$B\subseteq S$ and $C\cap S=\emptyset$, and \rn{must-si} gives $S\must C$.
If $C=\emptyset$, \rn{quod-libet} gives $S\must D$. Otherwise let $t\in C$.
Then $t\in H\setminus S$, so $\mu(S\cup\{t\})<\mu(S)$; moreover
$S\wedge t\may D$ by \rn{may-si}, and $\Valid_R(S\cup\{t\},D)$ holds. By
the induction hypothesis $S\wedge t\must D$ for every $t\in C$, and
\rn{must-trans} yields $S\must D$.
\end{proof}

\begin{lemma}[Singleton permissives]\label{lem:atom}
Let $A$ be finite, $c\notin A$, and suppose that $\Perm_R(A,\{c\})$. Then
$R\derivesd A\may c$.
\end{lemma}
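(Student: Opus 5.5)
The plan is a strong induction on $\mu(A)$, closing each case with one application of \rn{may-trans-disj} whose intermediaries are $\nx(A)$. Since we may assume $\Perm_R(A',\{c\})$ for every $A'\supseteq A$, the set $\nx(A)$ should act as an adequate disjunction of intermediaries.

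\emph{Key validity claim.} First I would show $\Valid_R(A\cup\{c\},\nx(A))$.
\begin{itemize}
\item Let $X$ be a model of $R$ with $A\cup\{c\}\subseteq X$.
\item By Lemma~\ref{lem:reach} there is $T\in\out(R,A\cup\{c\})$ with $A\cup\{c\}\subseteq T\subseteq X$.
\item By $\Perm_R(A,\{c\})$ with $Y=A$, we get $T\in\out(R,A)\subseteq\cns(R,A)$.
\item Since $c\in T$ and $c\notin A$, we have $T\not\subseteq A$. Lemma~\ref{lem:succ}(b) then gives $T\cap\nx(A)\neq\emptyset$, hence $X\cap\nx(A)\neq\emptyset$.
\end{itemize}
I expect this to be the main obstacle. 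It is the one place where the semantic hypothesis $\Perm$ must be converted into the side condition of the rule, and it relies on the reachability lemma.

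\emph{Degenerate case.} If $\nx(A)=\emptyset$, the claim says that no model of $R$ includes $A\cup\{c\}$.
\begin{itemize}
\item \rn{may-id} and \rn{may-so} give $A\wedge c\may\bot$.
\item Lemma~\ref{lem:must}, with the vacuous $\Valid_R(A\cup\{c\},\emptyset)$, gives $A\wedge c\must\bot$.
\item \rn{may-falsum} then yields $A\may c$.
\end{itemize}
This avoids the restriction $n\ge1$ in \rn{may-trans-disj}.

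\emph{Main case.} Otherwise let $\nx(A)=\{b_1,\dots,b_n\}$ with $n\ge1$.
\begin{itemize}
\item \emph{First premise.} $A\may b_1\vee\dots\vee b_n$ is Lemma~\ref{lem:succ}(a).
\item \emph{Side condition.} From the first premise, \rn{may-si} gives $A\wedge c\may\nx(A)$. Combined with the validity claim, Lemma~\ref{lem:must} gives $A\wedge c\must b_1\vee\dots\vee b_n$.
\item \emph{Remaining premises, case $b_i=c$.} Then $A\wedge c\may c$ follows from \rn{may-id}, with \rn{may-so} and \rn{may-si} as needed.
\item \emph{Remaining premises, case $b_i\neq c$.} Here $c\notin A\cup\{b_i\}$. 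Also $\Perm_R(A\cup\{b_i\},\{c\})$ holds, because $\Perm_R(B,C)$ quantifies over $Y\supseteq B$ and so is preserved by enlarging $B$. Since $b_i\in H\setminus A$, we have $\mu(A\cup\{b_i\})<\mu(A)$. The induction hypothesis therefore gives $A\wedge b_i\may c$.
\end{itemize}
\rn{may-trans-disj} with $C=\{c\}$ then concludes $A\may c$.

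The base of the induction needs no separate treatment, because the argument never invokes the hypothesis for a set of equal measure. When $\mu(A)=0$ we have $\nx(A)=\emptyset$, so the degenerate case applies.
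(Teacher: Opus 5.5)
Your proof is correct and is essentially the paper's argument: the same induction on $\mu(A)$, the same key claim $\Valid_R(A\cup\{c\},\nx(A))$ obtained from Lemma~\ref{lem:reach}, $\Perm_R(A,\{c\})$ and Lemma~\ref{lem:succ}(b), and the same closing step, \rn{may-falsum} when $\nx(A)=\emptyset$ and \rn{may-trans-disj} otherwise. The only difference is cosmetic: the paper handles the case $c\in\nx(A)$ directly from a rule of $R$, whereas you absorb it into the \rn{may-trans-disj} step through the trivial premise $A\wedge c\may c$, which works just as well.
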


\begin{proof}
By induction on $\mu(A)$. Write $S^*=\nx(A)$. If $c\in S^*$, some
$B\anyto C\in R$ has $B\subseteq A$ and $c\in C$, and $A\may c$ follows by
\rn{may-must} (if needed), \rn{may-si} and \rn{may-so}. Suppose $c\notin S^*$.
\begin{enumerate}[label=(\roman*)]
\item By Lemma~\ref{lem:succ}(a), $A\may S^*$.
\item Let $b\in S^*$. Then $b\notin A$ and $b\in H$, so
  $\mu(A\cup\{b\})<\mu(A)$; also $c\notin A\cup\{b\}$, as $c\neq b$; and
  $\Perm_R(A\cup\{b\},\{c\})$ holds because $\Perm_R$ is preserved by
  enlarging the body. By the induction hypothesis, $A\wedge b\may c$.
\item $\Valid_R(A\cup\{c\},S^*)$: let $X$ be a model of $R$ with
  $A\cup\{c\}\subseteq X$. Lemma~\ref{lem:reach} gives
  $T\in\out(R,A\cup\{c\})$ with $T\subseteq X$, and $\Perm_R(A,\{c\})$ gives
  $T\in\out(R,A)$. As $c\in T\setminus A$, Lemma~\ref{lem:succ}(b) gives
  $T\cap S^*\neq\emptyset$, hence $X\cap S^*\neq\emptyset$.
\item From (i), \rn{may-si} gives $A\wedge c\may S^*$, and by (iii) and
  Lemma~\ref{lem:must}, $A\wedge c\must S^*$.
\end{enumerate}
If $S^*=\emptyset$, \rn{may-falsum} applied to $A\wedge c\must\bot$ yields
$A\may c$. Otherwise \rn{may-trans-disj}, with the premises (i) and (ii) and
the side condition (iv), yields $A\may c$.
\end{proof}

\begin{theorem}[Completeness for finite rule sets]\label{thm:completeness}
Let $R$ be a finite set of rules and $r$ a rule. If $R\entails r$, then
$R\derivesd r$. Consequently $\kl(R)=\derivd(R)$ for every finite $R$.
\end{theorem}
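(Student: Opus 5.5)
The plan is to reduce the theorem to the lemmas just proved, using the local characterization of Proposition~\ref{prop:local}. First take a permissive $r=B\may C$ with $R\entails r$. By Proposition~\ref{prop:local}(a), $\Perm_R(B,C)$ holds, and we derive $B\may C$ atom by atom. Start from $B\may\bot$, obtained by \rn{may-id} and, if $B\neq\emptyset$, \rn{may-so}. For each $c\in C$ we need $B\may c$:
\begin{itemize}
\item If $c\in B$, obtain $B\may c$ from $B\may B$ (\rn{may-id}) by \rn{may-so}.
\item If $c\notin B$, note that $\Perm_R(B,\{c\})$ follows from $\Perm_R(B,C)$ because $\Perm$ is pointwise in the head, so Lemma~\ref{lem:atom} gives $R\derivesd B\may c$.
\end{itemize}
Since $C$ is finite, finitely many applications of \rn{may-union} combine these into $B\may C$.

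Next take an imperative $r=B\must C$. By Proposition~\ref{prop:local}(b), both $\Perm_R(B,C)$ and $\Valid_R(B,C)$ hold. The permissive case just treated gives $R\derivesd B\may C$ from $\Perm_R(B,C)$. Lemma~\ref{lem:must}, applied with $S=B$ and $D=C$, then turns this permissive together with $\Valid_R(B,C)$ into $R\derivesd B\must C$.

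For the consequence $\kl(R)=\derivd(R)$, the inclusion $\kl(R)\subseteq\derivd(R)$ is what was just shown. The reverse inclusion is Theorem~\ref{thm:soundness}, which holds in particular for finite $R$.

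I do not expect a real obstacle here, since the heavy lifting is in Lemma~\ref{lem:atom}. The points that need care are bookkeeping. The edge cases are $C=\emptyset$, where $B\may\bot$ is handled as above, and atoms $c\in B$, which Lemma~\ref{lem:atom} excludes by its hypothesis $c\notin A$. Finiteness of $R$ is used only inside the lemmas, through the measure $\mu$ and through $\nx(A)$ being finite. Finiteness of $C$ is what allows \rn{may-union} to be iterated.
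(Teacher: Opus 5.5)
Your proposal is correct and follows essentially the same route as the paper. Proposition~\ref{prop:local} reduces everything to $\Perm_R(B,\{c\})$ for each $c\in C$; each such atom is handled by \rn{may-id} and \rn{may-so} when $c\in B$ and by Lemma~\ref{lem:atom} otherwise, the results are combined by \rn{may-union} starting from $B\may\bot$, Lemma~\ref{lem:must} upgrades the permissive to the imperative using $\Valid_R(B,C)$, and Theorem~\ref{thm:soundness} supplies the reverse inclusion.
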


\begin{proof}
Let $r$ be $A\may C$ or $A\must C$. By Proposition~\ref{prop:local},
$\Perm_R(A,C)$ holds, hence $\Perm_R(A,\{c\})$ for each $c\in C$. For
$c\in A$, \rn{may-id} and \rn{may-so} give $A\may c$; for $c\notin A$,
Lemma~\ref{lem:atom} does. Combining these by \rn{may-union}, starting from
$A\may\bot$, gives $R\derivesd A\may C$. If $r$ is an imperative,
Proposition~\ref{prop:local} also gives $\Valid_R(A,C)$, and
Lemma~\ref{lem:must} yields $R\derivesd A\must C$. The equality follows with
Theorem~\ref{thm:soundness}.
\end{proof}

The proof is constructive, and the derivations it produces only involve
atoms that occur in $R$ or $r$. A familiar consequence is a model existence
theorem.

\begin{corollary}[Consistency and outcomes]\label{cor:model}
For a finite set $R$ of rules the following are equivalent:
(i)~$R\nderivesd\top\must\bot$; (ii)~$R$ has a model;
(iii)~$\out(R,\emptyset)\neq\emptyset$.
\end{corollary}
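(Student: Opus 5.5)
I would prove the cycle of implications (ii)$\Rightarrow$(iii)$\Rightarrow$(i)$\Rightarrow$(ii), using Lemma~\ref{lem:reach} for the first step, soundness (Theorem~\ref{thm:soundness}) with Proposition~\ref{prop:local} for the second, and completeness (Theorem~\ref{thm:completeness}) for the third.

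\emph{(ii)$\Rightarrow$(iii).} Let $X$ be a model of $R$. Apply Lemma~\ref{lem:reach} with $S=\emptyset\subseteq X$. It gives $T\in\out(R,\emptyset)$, so $\out(R,\emptyset)\neq\emptyset$. Finiteness of $R$ is used here, since Lemma~\ref{lem:reach} is proved under the standing assumption of this section.

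\emph{(iii)$\Rightarrow$(i).} Suppose $R\derivesd\top\must\bot$. By Theorem~\ref{thm:soundness}, $R\entails\top\must\bot$. By Proposition~\ref{prop:local}(b), $\Valid_R(\emptyset,\emptyset)$ then holds, so no model $X$ of $R$ exists, since $\emptyset\cap X=\emptyset$. But every element of $\out(R,\emptyset)$ is a model of $R$, so $\out(R,\emptyset)=\emptyset$. This is the contrapositive.

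\emph{(i)$\Rightarrow$(ii).} Suppose $R$ has no model. Then $\Valid_R(\emptyset,\emptyset)$ holds vacuously. $\Perm_R(\emptyset,\emptyset)$ also holds vacuously, because $C=\emptyset$ has no elements $c$. By Proposition~\ref{prop:local}(b), $R\entails\top\must\bot$, and Theorem~\ref{thm:completeness} gives $R\derivesd\top\must\bot$, contradicting (i).

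A more direct route for this step is available. $R\derivesd\top\may\bot$ holds by \rn{may-id}, and Lemma~\ref{lem:must} with $S=\emptyset$, $D=\emptyset$ then yields $\top\must\bot$. Either route works.

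I expect no real obstacle. The only points needing care are the vacuous instances of $\Valid$ and $\Perm$ for the empty head, and the fact that finiteness of $R$ enters through Lemma~\ref{lem:reach} and completeness.
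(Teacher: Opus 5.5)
Your proof is correct and uses the same ingredients as the paper. Proposition~\ref{prop:local} with the vacuous $\Perm_R(\emptyset,\emptyset)$, together with Theorems~\ref{thm:soundness} and~\ref{thm:completeness}, links (i) and (ii), and Lemma~\ref{lem:reach} with $S=\emptyset$, plus the fact that outcomes are models, handles (iii). Arranging these as a cycle is only cosmetic, and your shortcut for (i)$\Rightarrow$(ii), via \rn{may-id} and Lemma~\ref{lem:must} with $S=D=\emptyset$, is also valid and simply bypasses the full completeness theorem.
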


\begin{proof}
By Proposition~\ref{prop:local}, $R\entails\top\must\bot$ if and only if
$\Valid_R(\emptyset,\emptyset)$, since $\Perm_R(\emptyset,\emptyset)$ holds
trivially; that is, if and only if $R$ has no model;
Theorems~\ref{thm:soundness} and~\ref{thm:completeness} give (i)$\Leftrightarrow$(ii).
Lemma~\ref{lem:reach} with $S=\emptyset$ gives (ii)$\Rightarrow$(iii), and
outcomes are models.
\end{proof}

\begin{remark}[Infinite rule sets]\label{rem:infinite}
The restriction to finite $R$ cannot be dropped. Let $\Atoms$ be infinite and
$R_\infty=\{\top\must a\mid a\in\Atoms\}$. Outcomes are finite, but a model of
$R_\infty$ must contain every atom; so $\out(R_\infty,A)=\emptyset$ for every
finite $A$, and $R_\infty\entails\top\must\bot$. A derivation is a finite tree
and uses only finitely many members of $R_\infty$, forming a finite
$R'\subseteq R_\infty$. But the set of heads of $R'$ is an outcome of $R'$ from
$\emptyset$, so $R'\nentails\top\must\bot$, and by
Theorem~\ref{thm:soundness}, $R'\nderivesd\top\must\bot$. Hence
$R_\infty\nderivesd\top\must\bot$, and likewise in the original calculus. The
same example shows that the least-model characterization
of~\cite[Prop.~1]{ESS20} fails for $R_\infty$: the unique definite program
encoding $R_\infty$ consists of the facts $a\leftarrow{}$ for $a\in\Atoms$,
whose least model $\Atoms$ satisfies $R_\infty$, whereas
$\out(R_\infty,\emptyset)=\emptyset$. Under the reading of Definition~5 with
arbitrary inputs, $R_\infty$ is not a counterexample, since
$\out(R_\infty,\Atoms)=\{\Atoms\}$; we leave completeness for infinite rule
sets under that reading open.
\end{remark}

\begin{remark}[Finite and arbitrary inputs]\label{rem:inputs}
Let $R$ be finite and $L$ a finite set containing every atom of $R$ and $r$.
Since bodies and heads lie within $L$, traces from an arbitrary set $A$ and
from $A\cap L$ correspond step by step, and satisfaction of $R$ depends only
on atoms in $L$; hence
$\out(R,A)=\{X\cup(A\setminus L)\mid X\in\out(R,A\cap L)\}$, and likewise
for $R\cup\{r\}$. Thus for finite $R$, Definition~5 gives the same relation
whether inputs range over finite or over arbitrary sets of atoms, and
Theorems~\ref{thm:soundness} and~\ref{thm:completeness} hold under both
readings.
\end{remark}

\begin{remark}[Completeness via least models]\label{rem:alt-completeness}
The same strategy can also be carried out with the definite-program
semantics of~\cite[\S3.3]{ESS20}: reduction to singleton permissives,
induction on bodies, the successor set as the disjunction of
intermediaries, and \rn{may-trans-disj}, or \rn{may-falsum} when the
disjunctive imperative has an empty head. A proof along these lines relies
on two facts that are not established in~\cite{ESS20}. The first is the
least-model characterization of $\out$~\cite[Prop.~1]{ESS20}, which is
easily verified for finite $R$ but is only sketched in~\cite{ESS20}, and
which fails for some infinite $R$ (Remark~\ref{rem:infinite}). The second
is the transfer of semantic entailment along inference rules, for instance
from $R\entails A\may C$ to $R\entails A\may c$ for $c\in C$ by the
soundness of \rn{may-so}; this is the closure of $\kl(R)$ under the rules,
which does not follow from~\cite[Prop.~5]{ESS20} as stated but is provided
by Theorem~\ref{thm:soundness}.
\end{remark}

% ---------------------------------------------------------------------------
\section{Mechanization}
\label{sec:mechanization}
% ---------------------------------------------------------------------------

The main results on \klone{} in
Sections~\ref{sec:prelim}--\ref{sec:completeness}, listed in
Table~\ref{tab:agda}, are formalized in the proof assistant
Agda~\cite{Norell07}, version 2.8.0,
in safe mode, without libraries beyond Agda's builtin modules, and without
postulates, holes or disabled termination and positivity checks. The main
file, \agda{KL1Completeness.agda}, type-checks in about five seconds; a
supplementary file covers arbitrary inputs and the rule set $R_\infty$ of
Remark~\ref{rem:infinite}.

Atoms are natural numbers, with no bound on the atoms that may occur in
inputs or in intermediate steps of derivations. Finite sets are represented
canonically, so that extensional equality coincides with equality. A rule set
is a membership predicate, possibly infinite, and finiteness is an explicit
enumeration by a list. Satisfaction, traces, outcomes and entailment are
transcriptions of Definitions~1, 2 and~5 of~\cite{ESS20} (entailment as the two
inclusions of outcome sets for all finite inputs), and the original calculus
has exactly the rules of Fig.~\ref{fig:rules}, with the side condition of
\rn{may-trans} as derivations of singleton imperatives. Table~\ref{tab:agda}
lists the correspondence. The formal statements of soundness and
completeness are restated in a final summary module directly in terms of the
primitive notions.

\begin{table}[t]
\caption{Results and their Agda counterparts in \agda{KL1Completeness.agda}
(module prefixes as in the file; $^{\dagger}$ in \agda{KL1Supplement.agda})}
\label{tab:agda}
\footnotesize
\begin{tabular}{@{}>{\raggedright}p{0.36\textwidth}@{\hspace{6pt}}>{\raggedright\arraybackslash}p{0.61\textwidth}@{}}
\toprule
Result & Agda identifier\tabularnewline
\midrule
Lemma~\ref{lem:traces} & \agda{Semantics.cns-initial}, \agda{cns-mono},
  \agda{Locality.cns-concat}, \agda{cns-grow}\tabularnewline
Proposition~\ref{prop:outcomes} ($\Rz\entails\rz$) &
  \agda{ExampleEntailment.entailed}; \agda{entailed-all-inputs}$^{\dagger}$\tabularnewline
Lemma~\ref{lem:valid} & \agda{Obligations.must-valid}\tabularnewline
Proposition~\ref{prop:invariant} & \agda{NonDerivability.head-invariant}\tabularnewline
Theorem~\ref{thm:incompleteness} & \agda{Counterexample.counterexample},
  \agda{Summary.incompleteness}\tabularnewline
Proposition~\ref{prop:generalization} & \agda{Disj.may-trans-derivable},
  \agda{plus-to-disj}, \agda{may-trans-disj-any},\tabularnewline
 & \agda{DisjExample.example-disj}, \agda{disj-not-admissible-in-old}\tabularnewline
Proposition~\ref{prop:local} & \agda{Locality.sem-to-local},
  \agda{local-to-sem}\tabularnewline
Corollary~\ref{cor:invariance} & \agda{Invariance.entailment-invariant}\tabularnewline
Theorem~\ref{thm:soundness} & \agda{DisjSoundness.soundness},
  \agda{closure}; \agda{Soundness.sound-local-rel}\tabularnewline
Lemmas~\ref{lem:reach}--\ref{lem:atom} & \agda{Completeness.Complete.reach},
  \agda{may-succ}, \agda{first-step},\tabularnewline
 & \agda{must-complete}, \agda{may-atom}\tabularnewline
Theorem~\ref{thm:completeness} & \agda{DisjCompleteness.completeness},
  \agda{Summary.completeness}\tabularnewline
Remark~\ref{rem:infinite} (semantic part) &
  \agda{CountableScope.facts-entail-falsum}$^{\dagger}$,
  \agda{least-model-satisfies}$^{\dagger}$\tabularnewline
\bottomrule
\end{tabular}
\end{table}

The mechanization covers the statements, not the informal arguments: the
proofs in this paper follow the formal ones in outline but differ from them
in inessential details (for example, the entailment in
Proposition~\ref{prop:outcomes} is checked in Agda by a trace-replay
argument rather than by the explicit description of outcomes). Corollary~\ref{cor:model} is
an immediate consequence of mechanized results, but is not itself stated in
Agda. Remarks~\ref{rem:robust} and~\ref{rem:inputs} and the non-derivability
half of Remark~\ref{rem:infinite} are not mechanized. As
always, the guarantee rests on the correctness of Agda and on the fidelity of
the formal definitions to those of~\cite{ESS20}. Nothing in Section~\ref{sec:kl2} or Appendix~\ref{app:kl2}
is mechanized.

% ---------------------------------------------------------------------------
\section{Consequences for \kltwo{} and \klthree}
\label{sec:kl2}
% ---------------------------------------------------------------------------

\subsection{\kltwo}

We recall~\cite[\S4]{ESS20} briefly. \kltwo{} adds negated atoms $\nt a$;
literals form a set $\Lits$, $\cmp{\ell}$ is the complement of a literal
$\ell$, and $\cmp{C}=\{\cmp{c}\mid c\in C\}$. A set of literals is
\emph{consistent} if it contains no pair $a,\nt a$, and \emph{maximal
consistent} if, in addition, it contains $a$ or $\nt a$ for every atom $a$.
Rules have the same form as in \klone, with finite sets of literals as
bodies and heads, and \klone{} notions are applied to literals as if they
were atoms. Let $\Cons=\{a\wedge\nt a\must\bot\mid a\in\Atoms\}$. A set of
literals is \emph{violating} for $R$ if no maximal consistent set of
literals that includes it satisfies $R$; $\aux(R)$ consists of the rules
$A\setminus\{\ell\}\must\cmp{\ell}$ for finite violating sets $A$ and
$\ell\in A$; and $\outtwo(R,A)=\out(R\cup\Cons\cup\aux(R),A)$, with
$R\entailstwo r$ and $\kltwofn(R)$ defined from $\outtwo$ as in
Definition~\ref{def:entailment}~\cite[Defs.~6--8]{ESS20}. Let $v(R)$ be the
set of consistent sets $B\cup\cmp{C}$ with $B\must C\in R$, let $v^*(R)$ be
its closure under resolution, which passes from $A\cup\{\ell\}$ and
$B\cup\{\cmp{\ell}\}$ to $A\cup B$, and let $\aux_e(R)$ consist of the rules
$A\setminus\{\ell\}\must\cmp{\ell}$ for $A\in v^*(R)$ and
$\ell\in A$~\cite[Defs.~10--12]{ESS20}. The calculus of \kltwo{} adds to
Fig.~\ref{fig:rules} the rules $\nt$\textsc{-left}, with conclusion
$c\wedge\nt c\must\bot$, and $\nt$\textsc{-right}, which infers
$A\must\cmp{b}$ from $A\wedge b\must\bot$. We write \kltwod{} for \kltwo{}
with \rn{may-trans} replaced by \rn{may-trans-disj}, and $\derivtwod$ for its
derivability operator.

ESS establish the decomposition $\kltwofn(R)=\kl(R\cup\Cons\cup\aux(R))$
(Prop.~6), show that $\aux(R)$ may be replaced by $\aux_e(R)$
(Prop.~13), and prove that if \klone{} is complete then so is \kltwo{}
(Prop.~16), via
\begin{align*}
r\in\kltwofn(R)\ &\Rightarrow\ r\in\kl(R\cup\Cons\cup\aux_e(R))\\
&\Rightarrow\ r\in\deriv(R\cup\Cons\cup\aux_e(R))\ \Rightarrow\ r\in\derivtwo(R).
\end{align*}
One might expect that Theorem~\ref{thm:completeness}, used in the second
step, makes \kltwod{} complete for all sets of rules. This does not follow
as stated. First,
the hypothesis of Prop.~16, that $\kl(R)\subseteq\deriv(R)$ for \emph{all}
sets $R$, is false under the finite-input reading of~\cite[\S3.2]{ESS20},
when $\Atoms$ is infinite, for every calculus with finite derivations that
is sound for finite rule sets (Remark~\ref{rem:infinite}); under the
arbitrary-input reading it is open. So the proposition cannot be applied as
stated. Second, even when $R$ is finite, the \klone{} rule set
$R\cup\Cons\cup\aux_e(R)$ in the middle step is infinite whenever $\Atoms$ is,
because $\Cons$ contains a constraint for every atom; finite completeness does
not apply to it directly. Third, Propositions~6 and~13 concern \klone{}
entailment for the infinite rule sets $R\cup\Cons\cup\aux(R)$ and
$R\cup\Cons\cup\aux_e(R)$, and their proofs in~\cite{ESS20} are brief: the
second rests on Proposition~12, whose proof is only sketched, and parts of
the development appeal to the least-model characterization, which fails for
some infinite rule sets (Remark~\ref{rem:infinite}).

For finite $R$ all three gaps can be closed. The following proposition
re-establishes the decomposition for finite rule sets; its proof, in
Appendix~\ref{app:kl2}, uses only the operational semantics.

\begin{proposition}[Decomposition for finite rule sets]\label{prop:decomposition}
For every finite set $R$ of \kltwo{} rules,
\[
\kltwofn(R)=\kl(R\cup\Cons\cup\aux(R))=\kl(R\cup\Cons\cup\aux_e(R)).
\]
\end{proposition}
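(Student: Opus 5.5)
The plan is to prove the two equalities separately. The common device is that $\kl$ depends only on outcome functions (Corollary~\ref{cor:invariance}) together with the local characterization (Proposition~\ref{prop:local}). Throughout, fix a finite set $L$ of atoms containing every atom of $R$ and of the rule $r$ under consideration. Write $R'=R\cup\Cons\cup\aux(R)$ and $R'_e=R\cup\Cons\cup\aux_e(R)$.

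\emph{Step 1: violating sets are determined by maximal consistent models of $R$.} A maximal consistent $M$ satisfies $R'$ if and only if it satisfies $R$. Indeed, $\Cons$ holds in any consistent set. For an aux rule $A\setminus\{\ell\}\must\cmp{\ell}$ with $A\setminus\{\ell\}\subseteq M$, we cannot have $\ell\in M$, since otherwise $A\subseteq M$ with $A$ violating. So $\cmp\ell\in M$ by maximality. Because $R$ mentions only atoms of $L$, a consistent $A$ is violating iff its $L$-part is. Hence every consistent finite set of literals over $L$ that decides every atom of $L$ and satisfies $R$ extends to a maximal consistent model of $R$, and is itself a model of $R'$.

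\emph{Step 2: first equality.} Note that $\outtwo(R\cup\{r\},A)=\out(R\cup\{r\}\cup\Cons\cup\aux(R\cup\{r\}),A)$. So the whole point is to show that, under either hypothesis $R\entailstwo r$ or $R'\entails r$, we have $\aux(R\cup\{r\})=\aux(R)$. Then both sides compare $\out(R',\cdot)$ with $\out(R'\cup\{r\},\cdot)$, and the two hypotheses coincide. If $r$ is a permissive, satisfaction is unaffected, so the violating sets and hence the aux sets agree trivially. If $r=B\must C$, it suffices to show that every maximal consistent model $M$ of $R$ satisfies $r$.
\begin{itemize}
\item Under $R'\entails r$: by Step 1, $M$ is a model of $R'$. $\Valid_{R'}(B,C)$ from Proposition~\ref{prop:local}(b) then gives $C\cap M\neq\emptyset$.
\item Under $R\entailstwo r$: suppose $M$ violates $r$. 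Let $X$ be the restriction of $M$ to literals over $L$. By Step 1, $X$ is a finite model of $R'$, so $X\in\out(R',X)=\outtwo(R,X)$ by Lemma~\ref{lem:traces}(d). But $X$ violates $r$, so $X\notin\outtwo(R\cup\{r\},X)$. This contradicts $R\entailstwo r$.
\end{itemize}

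\emph{Step 3: second equality.} I will show $\out(R',A)=\out(R'_e,A)$ for every finite $A$, and then apply Corollary~\ref{cor:invariance}. This requires two facts.
\begin{itemize}
\item Soundness of resolution gives $v^*(R)\subseteq$ violating sets, hence $\aux_e(R)\subseteq\aux(R)$.
\item Completeness of propositional resolution over the finite atom set $L$ gives that every consistent violating set contains some $V\in v^*(R)$. Since $R$ is finite, $v^*(R)$ is finite.
\end{itemize}
Now take an extra rule $A_0\setminus\{\ell\}\must\cmp\ell$ of $\aux(R)$, with $A_0\supseteq V\in v^*(R)$. I claim no model $X$ of $R'_e$ includes $A_0\setminus\{\ell\}$.
\begin{itemize}
\item If $\ell\notin V$, then $V\subseteq X$. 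The $\aux_e$ rule $V\setminus\{m\}\must\cmp m$ forces $\cmp m\in X$ alongside $m$, contradicting $\Cons$.
\item If $\ell\in V$, the $\aux_e$ rule $V\setminus\{\ell\}\must\cmp\ell$ already yields exactly the same conclusion.
\end{itemize}
Consequently the extra rules are satisfied by every model of $R'_e$. They also never fire along a trace whose endpoint is an outcome, because every stage of such a trace is included in the final model. So outcomes from every $A$ coincide, in both directions via Lemma~\ref{lem:traces}(a).

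The main obstacle is Step 3's appeal to completeness of resolution for $v^*(R)$. It must be checked against ESS's exact definition: $v(R)$ keeps only consistent sets, and an inconsistent violating set $A$ must be handled separately, via $\Cons$ and the case analysis above. I would first try to prove it by the standard induction on the number of atoms of $L$ not decided by $A$, splitting on $a$ versus $\nt a$ and resolving the two witnesses.
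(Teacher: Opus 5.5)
Your first equality follows the paper's reduction: you show $\aux(R\cup\{r\})=\aux(R)$, that is, that every maximal consistent model of $R$ satisfies $r$. Under $R\entailstwo r$, your argument is correct and more direct than the paper's appeal to Lemma~\ref{lem:violating}: the restriction $X=M\cap\mathit{Lit}(L)$ of a violating $M$ lies in $\out(R',X)=\outtwo(R,X)$ but not in $\outtwo(R\cup\{r\},X)$.

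Under $R'\entails r$, however, you apply $\Valid_{R'}(B,C)$ to $M$ itself. Models in $\Valid$ are finite sets, because Proposition~\ref{prop:local} extracts validity via $X\in\out(R,X)$. But $M$ is infinite as soon as $\Atoms$ is, and for infinite rule sets validity over finite models does not in general extend to infinite ones (compare $R_\infty$ in Remark~\ref{rem:infinite}). Apply $\Valid_{R'}(B,C)$ instead to $M\cap\mathit{Lit}(L)$, which your Step~1 already shows is a finite model of $R'$ including $B$. This is exactly the role of the finite set $T$ in the paper.

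The genuine gap is in Step~3. Your claim that no model of $R'_e$ includes $A_0\setminus\{\ell\}$ fails when $\ell\in V$, and with it the conclusion that the extra rules never fire on a trace ending in an outcome. Take $R=\{p\must q\}$. Then $v^*(R)=\{\{p,\nt q\}\}$ and $\aux_e(R)=\{p\must q,\ \nt q\must\nt p\}$. For a further atom $s$, the set $A_0=\{p,\nt q,s\}$ is consistent and violating. With $\ell=\nt q$, the extra rule $p\wedge s\must q$ takes $\{p,s\}$ to $\{p,q,s\}$, which is an outcome of $R'$ and a model of $R'_e$. So Lemma~\ref{lem:traces}(a) plus satisfaction gives you only $\out(R'_e,A)\subseteq\out(R',A)$, not the converse.

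The missing step is a trace rewriting:
\begin{itemize}
\item A firing of $A_0\setminus\{\ell\}\must\cmp{\ell}$ at a stage $X_k$, when $\ell\in V$, can be replaced by a firing of $V\setminus\{\ell\}\must\cmp{\ell}\in\aux_e(R)$. Its body lies in $X_k$ and its head is the same literal.
\item When $A_0$ is inconsistent and $\cmp{\ell}$ is in the body, the step adds nothing and can be deleted.
\item Only in the remaining cases is firing impossible, for the reason you give.
\end{itemize}
You also omit $V=\emptyset$, where there is no $m\in V$ to pick. In that case either $\top\must\bot\in R$, or $\top\must\ell$ and $\top\must\cmp{\ell}$ both belong to $\aux_e(R)$. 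Either way $R'_e$, and hence $R'$, has no model, and both outcome sets are empty.

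With these additions your semantic argument is a valid alternative to the paper's. The paper derives each $\rho\in\aux(R)$ from $R_e$ by the same case analysis, obtains $R_e\entails\rho$ from Theorem~\ref{thm:soundness}, and applies Corollary~\ref{cor:invariance} to the finitely many $\rho$ a trace uses. Your route avoids the soundness theorem; the paper's avoids reasoning about individual firings, which is exactly where your version went wrong.
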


\begin{lemma}\label{lem:auxe}
If $R$ is finite, then $v^*(R)$ and $\aux_e(R)$ are finite, and their
members contain only literals over atoms of $R$.
\end{lemma}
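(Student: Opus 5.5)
The plan is to let $L$ be the finite set of atoms occurring in $R$ and to prove the finiteness and the restriction on atoms together. The key observation is that every member of $v(R)$ is a set $B\cup\cmp{C}$ with $B\must C\in R$, so it consists of literals over $L$. Since $R$ is finite, $v(R)$ is finite.

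Next we show that resolution never leaves the literals over $L$. Suppose $A\cup\{\ell\}$ and $B\cup\{\cmp{\ell}\}$ consist of literals over $L$. Then their resolvent $A\cup B$ is a subset of their union, so it also consists of literals over $L$. By induction on the generation of $v^*(R)$ as the least set containing $v(R)$ and closed under resolution, every member of $v^*(R)$ is a set of literals over $L$. (If the definition of $v^*$ keeps only consistent resolvents, the argument is unchanged, since we only use that resolvents are subsets of the union.) Hence $v^*(R)$ is contained in the power set of $L\cup\{\nt a\mid a\in L\}$. That set of literals has $2|L|$ elements, so $v^*(R)$ has at most $2^{2|L|}$ members and is finite.

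For $\aux_e(R)$, each member has the form $A\setminus\{\ell\}\must\cmp{\ell}$ with $A\in v^*(R)$ and $\ell\in A$. Its body is a subset of $A$, and its head consists of the complement of a literal over $L$, which is again a literal over the same atom. So $\aux_e(R)$ contains only literals over $L$. It is the image of the finite set $\{(A,\ell)\mid A\in v^*(R),\ \ell\in A\}$, so it is finite. Nothing here is a real obstacle. The only point needing care is to make the induction on the closure $v^*(R)$ explicit, that is, to check that the set of finite sets of literals over $L$ contains $v(R)$ and is closed under resolution, and then to appeal to minimality of $v^*(R)$.
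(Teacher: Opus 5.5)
Your proof is correct and follows the same route as the paper. Members of $v(R)$ lie within the finite set of literals over the atoms of $R$, and resolution stays within that set, so $v^*(R)$ and hence $\aux_e(R)$ are finite. You simply make explicit the closure induction and the counting bound that the paper's one-line proof leaves implicit.
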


\begin{proof}
The members of $v(R)$ are subsets of the finite set of literals over atoms
of $R$, and resolution stays within it. Hence only finitely many sets, and
finitely many rules, arise.
\end{proof}

\begin{lemma}[Irrelevant constraints]\label{lem:irrelevant}
Let $R'$ be a set of \klone{} rules, $r$ a rule, and $L$ a finite set of
atoms containing every atom of $R'$ and $r$. Let $E$ be a set of
constraints $B\must\bot$ whose bodies are non-empty and disjoint from $L$.
If $R'\cup E\entails r$, then $R'\entails r$.
\end{lemma}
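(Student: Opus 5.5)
The plan is to use the local characterization, Proposition~\ref{prop:local}, for both $R'\cup E$ and $R'$, and to show that $\Valid$ and $\Perm$ for the relevant body and head transfer from $R'\cup E$ to $R'$. Let $r$ be $B\anyto C$ with $B,C\subseteq L$. Two observations drive everything. First, no rule of $E$ ever fires in a trace: its head is empty, so $\stp(R'\cup E,X)=\stp(R',X)$ for every $X$. Hence $\cns(R'\cup E,A)=\cns(R',A)$ for all $A$. Second, a set $X$ is a model of $R'\cup E$ iff it is a model of $R'$ and includes no body of $E$. Satisfaction of $R'$ depends only on $X\cap L$, while the $E$-constraints mention only atoms outside $L$.

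For validity, I will take a model $X$ of $R'$ with $B\subseteq X$. The set $X\cap L$ is again a model of $R'$ and includes $B$. It also includes no body of $E$, because those bodies are nonempty and disjoint from $L$. So $X\cap L$ is a model of $R'\cup E$, and $\Valid_{R'\cup E}(B,C)$ gives $C\cap X\cap L\neq\emptyset$.

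For $\Perm_{R'}(B,C)$, fix a finite $Y\supseteq B$, an atom $c\in C$, and $X\in\out(R',Y\cup\{c\})$. I want $X\in\out(R',Y)$. Put $Y_0=Y\cap L$ and $X_0=X\cap L$. As in Remark~\ref{rem:inputs}, traces under $R'$ from $Y$ and from $Y_0$ correspond by adding or removing $Y\setminus L$, since every step adds an atom of $L$. So $X=X_0\cup(Y\setminus L)$ and $X_0\in\out(R',Y_0\cup\{c\})$, using $c\in L$. Now $X_0\subseteq L$ avoids every body of $E$, so $X_0\in\out(R'\cup E,Y_0\cup\{c\})$. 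The hypothesis, read through Proposition~\ref{prop:local}, gives $\Perm_{R'\cup E}(B,C)$; applying it at the stage $Y_0\supseteq B$ yields $X_0\in\out(R'\cup E,Y_0)\subseteq\out(R',Y_0)$. Transporting back by adding $Y\setminus L$ gives $X\in\out(R',Y)$.

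The main obstacle is this transport step. Restricting to $L$ is exactly what removes any interference from $E$: a general $Y$ may include bodies of $E$, and then $\out(R'\cup E,Y)$ is empty and carries no information. So the correspondence of Remark~\ref{rem:inputs} has to be stated for an arbitrary $R'$, possibly infinite, whose atoms lie in $L$. Its proof is only the step-by-step trace argument with positive bodies, so it carries over unchanged.
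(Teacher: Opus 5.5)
Your proof is correct, but it takes a different route from the paper. The paper never invokes Proposition~\ref{prop:local}. It works directly with Definition~\ref{def:entailment} and treats $R''\in\{R',R'\cup\{r\}\}$ uniformly, in three steps:
\begin{enumerate}
\item Since constraints never enable steps, $\out(R''\cup E,A)$ consists of the members of $\out(R'',A)$ that satisfy $E$.
\item For inputs $A\subseteq L$, every outcome lies inside $L$ and so satisfies $E$ automatically. This turns the hypothesis into $\out(R',A)=\out(R'\cup\{r\},A)$ for all $A\subseteq L$.
\item The trace correspondence $\out(R'',A)=\{X\cup(A\setminus L)\mid X\in\out(R'',A\cap L)\}$ lifts this equality to all finite $A$.
\end{enumerate}

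You instead transfer the two local conditions separately. You get $\Valid$ by restricting a model $X$ of $R'$ to $X\cap L$. You get $\Perm$ by restricting the stage $Y$ and the outcome $X$ to $L$, applying $\Perm_{R'\cup E}(B,C)$ at $Y\cap L$, and transporting back by adding $Y\setminus L$.

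Both arguments rest on the same two facts. The constraints in $E$ are inert on subsets of $L$, and atoms outside $L$ are carried along passively by traces. So the difference is mostly packaging. The paper's version is shorter and self-contained: it needs neither direction of the local characterization nor a case split on the kind of $r$. Yours never has to consider the enlarged set $R'\cup\{r\}$, and it makes explicit that the hypothesis is used only at models and stages inside $L$.

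Incidentally, the worry in your last paragraph does not arise. Every body and head of $R'$ is a subset of the finite set $L$, so $R'$ is automatically finite. In any case, as you note, the trace correspondence uses only that bodies and heads lie in $L$.
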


\begin{proof}
Let $R''$ be $R'$ or $R'\cup\{r\}$. Constraints never enable steps, so
$\out(R''\cup E,A)=\{X\in\out(R'',A)\mid X\models E\}$. For $A\subseteq L$
every $X\in\out(R'',A)$ is included in $L$ and thus satisfies $E$, so
$\out(R'',A)=\out(R''\cup E,A)$; with the hypothesis,
$\out(R',A)=\out(R'\cup\{r\},A)$ for $A\subseteq L$. For arbitrary $A$,
traces from $A$ and from $A\cap L$ correspond step by step, because all
bodies and heads of $R''$ lie within $L$, and satisfaction of $R''$ depends
only on atoms in $L$. Hence
$\out(R'',A)=\{X\cup(A\setminus L)\mid X\in\out(R'',A\cap L)\}$, and the
equality for $A\cap L$ yields the equality for $A$.
\end{proof}

\begin{theorem}[\kltwo, finite rule sets]\label{thm:kl2}
For every finite set $R$ of \kltwo{} rules, $\kltwofn(R)=\derivtwod(R)$. That
is, \kltwod{} is sound and complete for finite sets of rules.
\end{theorem}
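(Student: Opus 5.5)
The plan is to prove the two inclusions separately, routing completeness through the finite \klone{} rule set
\[
R_L \;=\; R\cup\Cons_L\cup\aux_e(R),\qquad \Cons_L=\{a\wedge\nt a\must\bot\mid a\in L\},
\]
where $L$ is a finite set of atoms containing every atom of $R$ and of the target rule $r$. By Lemma~\ref{lem:auxe}, $R_L$ is finite.

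For \emph{completeness}, let $r\in\kltwofn(R)$. Proposition~\ref{prop:decomposition} gives $R\cup\Cons\cup\aux_e(R)\entails r$. I would write this set as $R_L\cup E$, where $E=\{a\wedge\nt a\must\bot\mid a\notin L\}$. Its members are constraints whose bodies are non-empty and, viewing literals as atoms, disjoint from the literals over $L$. Every literal of $R_L$ and of $r$ lies over $L$: for $\aux_e(R)$ this is Lemma~\ref{lem:auxe}. So Lemma~\ref{lem:irrelevant}, applied with $R'=R_L$ and the set of literals over $L$ playing the role of $L$, gives $R_L\entails r$. Theorem~\ref{thm:completeness} then yields $R_L\derivesd r$.

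It remains to show that every member of $R_L$ is derivable in \kltwod{} from $R$; since \kld{} is contained in \kltwod, this gives $r\in\derivtwod(R)$. Members of $R$ are trivial, and members of $\Cons_L$ come from $\nt$\textsc{-left}. For $\aux_e(R)$ I would show, by induction on the generation of $v^*(R)$, that $R\vdash A\must\bot$ for every $A\in v^*(R)$ in \kltwod:
\begin{itemize}
\item For a base element $B\cup\cmp{C}$ with $B\must C\in R$, first obtain $B\cup\cmp{C}\wedge c\must\bot$ for each $c\in C$ from $c\wedge\nt c\must\bot$ by \rn{must-si}. Then \rn{must-trans} applied to $B\cup\cmp{C}\must C$ (from $B\must C$ by \rn{must-si}) gives $B\cup\cmp{C}\must\bot$.
\item For a resolvent, $A\cup B\must\ell\vee\cmp{\ell}$ is obtainable: \rn{must-id}/\rn{must-si} gives $A\cup B\wedge\ell\must\ell$, then $\nt$\textsc{-right} gives $A\cup B\must\cmp{\ell}$, and \rn{must-union} with a permissive widens the head. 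Alternatively, case-split via \rn{must-trans} on $\ell$ using the two premises weakened by \rn{must-si}.
\end{itemize}
Finally, from $A\must\bot$ with $\ell\in A$, the rule $\nt$\textsc{-right} gives $A\setminus\{\ell\}\must\cmp{\ell}$.

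The delicate step in this derivation is getting an imperative $A\cup B\must\ell\vee\cmp{\ell}$ to feed \rn{must-trans}. The straightforward route is to obtain $A\cup B\wedge\ell\must\bot$ and then apply $\nt$\textsc{-right} to get $A\cup B\must\cmp{\ell}$. From there, \rn{must-trans} over the single intermediary $\cmp{\ell}$ with the premise $A\cup B\wedge\cmp{\ell}\must\bot$ concludes. When $\ell\in A\cup B$ or $\cmp\ell\in A\cup B$, these bodies collapse, and \rn{must-si} handles it.

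For \emph{soundness}, let $R\vdash r$ in \kltwod. I would again pick $L$ covering the atoms of the derivation, of $R$ and of $r$. Every rule of \kltwod{} other than the two $\nt$-rules is a \kld{} rule, and Theorem~\ref{thm:soundness} shows that $\kl(R_L)$ is closed under those rules. It also contains $R$ and the relevant instances of $\nt$\textsc{-left}. Closure under $\nt$\textsc{-right} must be checked directly via Proposition~\ref{prop:local}. Validity comes from consistency: models of $\Cons_L$ containing $A$ and avoiding $b$ must contain $\cmp b$, which holds in the maximal-consistent sense encoded by $\aux$. Permission requires showing that $\cmp b$ is added whenever $A\wedge b$ is violating; here $\aux$, rather than $\aux_e$, is the convenient side.

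Transferring back uses Proposition~\ref{prop:decomposition} together with the converse of Lemma~\ref{lem:irrelevant}: adding constraints over fresh atoms preserves entailment, by the same outcome correspondence. I expect this $\nt$\textsc{-right} permission clause to be the main obstacle. It requires unpacking Definition~6 of violating sets, and may force working with $\aux(R)$ through the first equality of Proposition~\ref{prop:decomposition} rather than with $\aux_e(R)$.
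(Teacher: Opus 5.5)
Your completeness half follows the paper's argument: the same finite set $R\cup\Cons_L\cup\aux_e(R)$, Proposition~\ref{prop:decomposition}, Lemma~\ref{lem:irrelevant} and Theorem~\ref{thm:completeness}. Your explicit derivation of $W\must\bot$ for $W\in v^*(R)$ correctly replaces the paper's appeal to the derived rules of ESS, in its second version: \rn{must-si} on both induction hypotheses, $\nt$\textsc{-right} to get $A\cup B\must\cmp{\ell}$, then \rn{must-trans}. The first bullet's version, which applies $\nt$\textsc{-right} to $A\cup B\wedge\ell\must\ell$, is not a legal step, since that rule needs an empty head.

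The genuine gap is in soundness, at the $\nt$\textsc{-right} case you flag, and your sketched justification does not work. From $A\wedge b\must\bot\in\kl(R_L)$, Proposition~\ref{prop:local} gives only that no model of $R_L$ (a finite set of literals) includes $A\cup\{b\}$. Consistency does not make every model containing $A$ contain $\cmp{b}$, because finite models need not decide $b$. For $R=\{b\must\bot\}$ and $A=\emptyset$, the empty set is a model of $R\cup\Cons_L$ without $\nt b$; only the auxiliary rule $\top\must\nt b$ excludes it. Your permission clause, in turn, assumes that $A\cup\{b\}$ is violating rather than proving it.

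The missing idea is Lemma~\ref{lem:reach2}: if no model of the augmented rule set includes $A\cup\{b\}$, then $A\cup\{b\}$ is violating for $R$. Its proof starts from a maximal consistent $X_m\supseteq A\cup\{b\}$ with $X_m\models R$ and extracts a finite model contained in $X_m$. Lemma~\ref{lem:flip} shows that $X_m$ satisfies the auxiliary rules and that the extracted set stays within the literals over the atoms of $R$ and of $A\cup\{b\}$.

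Once violatingness is known, you need no separate validity and permission checks, because $(A\cup\{b\})\setminus\{b\}\must\cmp{b}$ is then a member of $\aux(R)$. Working with $R_a=R\cup\Cons\cup\aux(R)$, as the paper does, $A\must\cmp{b}\in\kl(R_a)$ follows by \rn{must-si} and Theorem~\ref{thm:soundness}, which holds for the infinite $R_a$. Since $\kl(R_a)=\kltwofn(R)$, this finishes soundness without choosing any $L$.

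If you keep your finite $R_L$, the plan can still be completed. The restriction of $X_m$ to the literals over $L$ is a model of $R_L$, which gives violatingness. You must then also derive $A\must\cmp{b}$ from $R_L$ via Lemma~\ref{lem:resolution}, as in part~(3) of the proof of Proposition~\ref{prop:decomposition}, with $\Cons_L$ handling an inconsistent $A\cup\{b\}$. Only after that can you invoke Theorem~\ref{thm:soundness}.

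Your transfer from $R_L$ back to $R_L\cup E$ is fine: constraints never enable steps, so adjoining $E$ filters both outcome sets identically.
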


\begin{proof}
\emph{Completeness.} Let $R\entailstwo r$. Let $S$ be the finite set of
atoms occurring in $R$ and $r$, let $L$ be the set of literals over $S$, and
split $\Cons$ into $\Cons_S=\{a\wedge\nt a\must\bot\mid a\in S\}$ and
$E=\Cons\setminus\Cons_S$. The set $R'=R\cup\Cons_S\cup\aux_e(R)$ is finite and its
rules, like $r$, only involve literals in $L$ (Lemma~\ref{lem:auxe}). By
Proposition~\ref{prop:decomposition}, $R'\cup E\entails r$. The constraints
in $E$ have the non-empty bodies $\{a,\nt a\}$ with $a\notin S$, disjoint
from $L$, so Lemma~\ref{lem:irrelevant} gives $R'\entails r$, and
Theorem~\ref{thm:completeness}, applied with $\Lits$ as the set of atoms,
gives $R'\derivesd r$. Every member of $\Cons_S$ is an instance of
$\nt$\textsc{-left}. Every member $A\setminus\{\ell\}\must\cmp{\ell}$ of
$\aux_e(R)$ is derivable in \kltwo{} from $R$~\cite[Prop.~14 and the proof
of Prop.~16]{ESS20}: $A\must\bot$ is obtained from rules of $R$ by the
derived rules \textsc{must}-$\bot$ and \textsc{resolve}-$\bot$
of~\cite[\S4]{ESS20}, and $\nt$\textsc{-right} yields the member. Both
derived rules can be obtained using only $\nt$\textsc{-left},
$\nt$\textsc{-right} and rules of Fig.~\ref{fig:rules} other than
\rn{may-trans}. Since \kltwod{}
contains \kld, it follows that $r\in\derivtwod(R)$.

\emph{Soundness.} Let $R_a=R\cup\Cons\cup\aux(R)$. By
Proposition~\ref{prop:decomposition} it suffices to show that $\kl(R_a)$
contains $R$ and is closed under the rules of \kltwod. It contains $R$, and
it is closed under the rules of \kld{} by Theorem~\ref{thm:soundness}, which
applies to the infinite set $R_a$. It contains the conclusions of
$\nt$\textsc{-left}, which belong to $\Cons$. For $\nt$\textsc{-right}, let
$A\wedge b\must\bot\in\kl(R_a)$. By Proposition~\ref{prop:local}, no model
of $R_a$ includes $A\cup\{b\}$, so $A\cup\{b\}$ is violating for $R$ by
Lemma~\ref{lem:reach2} in Appendix~\ref{app:kl2}. Hence
$A\setminus\{b\}\must\cmp{b}$ belongs to $\aux(R)$, and $A\must\cmp{b}$ is in
$\kl(R_a)$ by \rn{must-si} and Theorem~\ref{thm:soundness}.
\end{proof}

\begin{remark}[Scope]\label{rem:kl2-status}
(i)~Theorem~\ref{thm:kl2} and Proposition~\ref{prop:decomposition} are
proved on paper only; they are not mechanized, and their proofs still
require independent mathematical review. (ii)~The restriction to
finite $R$ cannot be dropped under the finite-input reading: for
$R_\infty$ as in Remark~\ref{rem:infinite}, $\outtwo(R_\infty,A)=\emptyset$
for all finite $A$, while every finite $R'\subseteq R_\infty$ has
$\outtwo(R',\emptyset)\neq\emptyset$ by Lemma~\ref{lem:violating}, so the
argument of Remark~\ref{rem:infinite}, with the soundness half of
Theorem~\ref{thm:kl2}, shows that $\top\must\bot$ is entailed but not
derivable. (iii)~Since \rn{may-trans} is derivable in \kltwod{} (the proof
of Proposition~\ref{prop:generalization}(a) applies verbatim), the original
\kltwo{} calculus is sound for finite rule sets. We have not determined
whether it is complete: the transfer of the counterexample by
conservativity~\cite[Prop.~18]{ESS20} is not immediate, since the syntactic
half of that proposition is derived from Prop.~17, which assumes the
completeness of \klone. (iv)~Propositions~6 and~13 of~\cite{ESS20} for
infinite $R$ are not addressed here.
\end{remark}

\subsection{\klthree}

\klthree{} extends \kltwo{} with variables, implicitly universally quantified
rules, and existential quantification in heads, which is interpreted by the
generation of fresh witnesses~\cite[\S5]{ESS20}. Our completeness proof does
not transfer automatically, because it depends on finiteness at three
points: the measure $\mu$, which bounds the number of atoms that can be
added to a body; the finiteness of successor sets, which makes
\rn{may-trans-disj} applicable; and Remark~\ref{rem:inputs}, which confines
attention to the atoms of $R$ and $r$. With existential witness generation
the set of relevant ground atoms is unbounded, the construction of
Lemma~\ref{lem:reach} need not terminate, and successor sets of ground
instances may be infinite. Moreover the inference rules of \klthree{} are not
those of \klone{} applied to ground instances, and \klthree{} is closely
related to geometric logic~\cite[\S5.5]{ESS20}. We therefore regard
completeness of \klthree{} as open. What the present results suggest is that
any complete calculus for \klthree{} will need a disjunctive transitivity
principle for permissives, and that a proof of completeness will need a
replacement for the finiteness arguments above.

% ---------------------------------------------------------------------------
\section{Philosophical Significance}
\label{sec:philosophy}
% ---------------------------------------------------------------------------

ESS locate the primary task of Kant's logic in an operational question:
given the rules an agent has adopted and the acts it is already performing,
which further acts may or must it perform? A secondary task is inferential:
given the rules adopted, which further rules may or must be
adopted~\cite[\S\S2.4, 3.6]{ESS20}. In \klone{} the first question is answered
by $\out$, the second by $\entails$ and by the calculus. Because $\entails$ is
defined through $\out$, the secondary question is answered by reference to
the primary one: a rule follows from $R$ just in case adopting it makes no
difference to what can be done from any starting point. A sound and complete
calculus shows that this consequence relation can be tracked by finitely
many forms of reasoning about rules alone, without running the procedures
that the rules describe.

The counterexample shows that the original forms of reasoning missed a
pattern implicit in those procedures. Under $\Rz$, the act $c$ can only be
performed together with $b$ or with $d$, each of which licenses it; neither
is required by $c$ on its own. What \rn{may-trans} cannot register is
\emph{grounding by alternatives}: every performance of $c$ is grounded in a
rule, though not always in the same one. This bears directly on the Kantian
requirement, emphasized by ESS in rejecting weakening of the output, that
activity be grounded in rules rather than arbitrary~\cite[\S3.7]{ESS20}.
Adopting $\top\may c$ licenses no arbitrary act, for every outcome
containing $c$ already contained a ground for it. \rn{may-trans-disj} makes
this explicit while preserving the requirement: its side condition still
demands that performing $c$ bring with it some act from which $c$ is
permitted, which is why the inference rejected in~\cite[\S3.5.1]{ESS20}
remains blocked. Theorem~\ref{thm:completeness} shows that, for finite sets
of rules, nothing else was missing.

Completeness matters for the internal coherence of the reconstruction. If
rules are procedures that govern mental activity, rather than declarative
propositions with truth-values, the inferential relations among rules must
be fixed by what the procedures do. Theorems~\ref{thm:soundness} and~\ref{thm:completeness}
show that, within their scope, reasoning about rules is a faithful proxy for
reasoning about the acts they govern: the two aspects of logic that ESS
distinguish do not come apart. Footnote~24 of~\cite{ESS20} suggests a
further significance, connecting completeness, via compactness, with the
passage from a proof-theoretic notion (consistency) to a world-directed one
(having a model), and relating this to the semantics of Achourioti and van
Lambalgen~\cite{AvL11}. Our results bear on the formal side of that
suggestion in two ways. Corollary~\ref{cor:model} is the expected
model-existence theorem for finite rule sets: a finite $R$ is consistent if
and only if it has an outcome. But Remark~\ref{rem:infinite} shows that the
compactness step is not available for $\out$: a countable set of rules may
have no outcome although each of its finite subsets has one, because
outcomes are finite packages of acts. An argument along the lines of
footnote~24 would thus have to work with infinite packages of acts, as in
the least-model semantics of~\cite[\S3.3]{ESS20}, which can diverge from
$\out$ for infinite rule sets. These are observations about the formal
reconstruction. They do not, by themselves, bear on the truth of
transcendental idealism or on the existence of a mind-independent world.

% ---------------------------------------------------------------------------
\section{Conclusion}
\label{sec:conclusion}
% ---------------------------------------------------------------------------

The calculus proposed for \klone{} in~\cite{ESS20} is incomplete. The
incompleteness has a single source, the singleton side condition of
\rn{may-trans}, and disappears when that condition is made disjunctive: the
resulting calculus \kld{} is sound for arbitrary sets of rules, complete for
finite ones, and contains the original calculus. Completeness cannot be
extended to infinite sets of rules under the operational semantics with
finite inputs. These results are machine-checked. For \kltwo, the general
form of Prop.~16 of~\cite{ESS20} cannot be used, but for finite rule sets
the decomposition into \klone{} can be re-established, and \kltwo{} with the
disjunctive rule is sound and complete; these proofs are on paper only and
still require independent review.
Open problems include the completeness of the original \kltwo{} calculus,
infinite rule sets under the arbitrary-input reading of Definition~5, and
the treatment of \klthree.

\paragraph{Acknowledgements.} We are grateful to Richard Evans (Google
DeepMind), whose work with Marek Sergot and Andrew Stephenson~\cite{ESS20}
introduced \klone, for correspondence and mathematical discussions, for
examining the counterexample and the proposed disjunctive repair, and for
sharing a preliminary AI-generated completeness argument for the repaired
calculus.

\paragraph{Use of AI tools.} The counterexample of
Section~\ref{sec:counterexample} and the rule \rn{may-trans-disj} are due to
Damiano Presa; the counterexample was found without AI assistance. The
preliminary completeness argument mentioned above was generated using
Google's Gemini. Anthropic's Claude Opus~5.5 (maximum effort) assisted in
the development and checking of the proofs, including those of
Section~\ref{sec:kl2} and Appendix~\ref{app:kl2}, and in the Agda
formalization. The mathematical claims and the formalized proofs remain the
responsibility of the authors.

\noindent\textbf{Data availability.} The Agda formalization will be
released as open-source code on GitHub. The files
\agda{KL1Completeness.agda} and \agda{KL1Supplement.agda} are available
from the authors on request.

\bibliographystyle{abbrvnat}
\bibliography{references}

% ---------------------------------------------------------------------------
\appendix
\section{Proofs for Section~\ref{sec:kl2}}
\label{app:kl2}
% ---------------------------------------------------------------------------

Throughout this appendix $R$ is a finite set of \kltwo{} rules,
$R_a=R\cup\Cons\cup\aux(R)$ and $R_e=R\cup\Cons\cup\aux_e(R)$. For a set $Y$ of
literals, $\mathit{at}(Y)$ is the set of atoms of its literals, and for a set
$P$ of atoms, $\mathit{Lit}(P)$ is the set of literals over $P$. As before, a
model of $R_a$ is a finite set of literals that satisfies $R_a$. A maximal
consistent set $X_m$ fails to satisfy $B\must C$ if and only if it includes
$B\cup\cmp{C}$, since $C\cap X_m=\emptyset$ amounts to
$\cmp{C}\subseteq X_m$; and it cannot include an inconsistent set. Hence
$X_m\models R$ if and only if $X_m$ includes no member of $v(R)$.

\begin{lemma}[Flipping]\label{lem:flip}
\begin{enumerate}[label=(\alph*)]
\item Let $W$ be a finite set of literals that is not violating for $R$,
  and $\ell$ a literal with $\mathit{at}(\ell)\notin\mathit{at}(R)\cup
  \mathit{at}(W)$. Then $W\cup\{\ell\}$ is not violating for $R$.
\item Every maximal consistent set that satisfies $R$ satisfies $\Cons$ and
  $\aux(R)$.
\end{enumerate}
\end{lemma}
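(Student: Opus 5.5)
For (a) I would argue by flipping a single literal in a witnessing maximal consistent set. Since $W$ is not violating for $R$, fix a maximal consistent set $X_m\supseteq W$ with $X_m\models R$. If $\ell\in X_m$, then $X_m$ already witnesses that $W\cup\{\ell\}$ is not violating. Otherwise, by maximality, $\cmp{\ell}\in X_m$. I would then set
\[
X'=(X_m\setminus\{\cmp{\ell}\})\cup\{\ell\}.
\]
This set is consistent and maximal, because exactly one of $\ell,\cmp{\ell}$ lies in it for the atom $\mathit{at}(\ell)$, and every other atom is treated as in $X_m$. It includes $W$: since $\mathit{at}(\ell)\notin\mathit{at}(W)$, we have $\cmp{\ell}\notin W$, so removing $\cmp{\ell}$ loses nothing from $W$. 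It also includes $\ell$.

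It remains to see that $X'\models R$. I would use the observation stated just before the lemma: a maximal consistent set satisfies $R$ if and only if it includes no member of $v(R)$. Every member of $v(R)$ is a set of literals over $\mathit{at}(R)$. Because $\mathit{at}(\ell)\notin\mathit{at}(R)$, the sets $X'$ and $X_m$ agree on $\mathit{Lit}(\mathit{at}(R))$. Hence $X'$ includes a member of $v(R)$ exactly when $X_m$ does, and it does not, since $X_m\models R$. The only point needing care is this locality of satisfaction to $\mathit{at}(R)$, and the $v(R)$ characterization makes it immediate.

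For (b), let $X_m$ be maximal consistent with $X_m\models R$. First, $X_m\models\Cons$, because a consistent set never includes a body $\{a,\nt a\}$. Next, take a member $A\setminus\{\ell\}\must\cmp{\ell}$ of $\aux(R)$, where $A$ is finite and violating and $\ell\in A$. Suppose $A\setminus\{\ell\}\subseteq X_m$ and $\cmp{\ell}\notin X_m$. By maximality $\ell\in X_m$, so $A\subseteq X_m$. Then $X_m$ is a maximal consistent set including $A$ and satisfying $R$, which contradicts the assumption that $A$ is violating. Hence $X_m$ satisfies every member of $\aux(R)$. Neither part needs more than the definitions and the $v(R)$ reformulation, so I expect no real obstacle; the one point to handle is that $X_m$ may be infinite, which is harmless because satisfaction is defined pointwise for arbitrary sets.
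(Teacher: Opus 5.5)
Your proposal is correct and follows the paper's proof essentially step for step. In (a) you flip the literal over $\mathit{at}(\ell)$ in a witnessing maximal consistent set; your appeal to the $v(R)$ characterization just makes explicit the paper's remark that the two sets agree on all literals over atoms of $R$. In (b) you obtain $A\subseteq X_m$ from $\ell\in X_m$, which contradicts $A$ being violating, so $\cmp{\ell}\in X_m$ by maximality.
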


\begin{proof}
(a) Let $X_m\supseteq W$ be maximal consistent with $X_m\models R$, and let
$X_m'$ be obtained from $X_m$ by replacing its literal over
$\mathit{at}(\ell)$ by $\ell$. Then $X_m'$ is maximal consistent, includes
$W\cup\{\ell\}$ because $W$ contains no literal over $\mathit{at}(\ell)$, and
satisfies $R$ because $X_m$ and $X_m'$ agree on all literals over atoms of
$R$. (b) A maximal consistent $X_m$ is consistent, so it satisfies $\Cons$. Let
$V\setminus\{\ell\}\must\cmp{\ell}$ be a rule of $\aux(R)$ with
$V\setminus\{\ell\}\subseteq X_m$. If $\ell\in X_m$, then $V\subseteq X_m$,
which is impossible because $V$ is violating; so $\cmp{\ell}\in X_m$ by
maximality.
\end{proof}

\begin{lemma}[Reaching outcomes]\label{lem:reach2}
If a finite set $X$ of literals is not violating for $R$, then
$\outtwo(R,X)=\out(R_a,X)$ contains a set $T$ with
$X\subseteq T\subseteq\mathit{Lit}(\mathit{at}(R)\cup\mathit{at}(X))$.
Consequently, if no model of $R_a$ includes $X$, then $X$ is violating
for $R$.
\end{lemma}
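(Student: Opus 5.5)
The construction mirrors Lemma~\ref{lem:reach}: we walk from $X$ through $\cns(R_a,X)$ toward a fixed maximal consistent set, fixing one violated imperative at a time. Let $P=\mathit{at}(R)\cup\mathit{at}(X)$, a finite set of atoms. Since $X$ is not violating, there is a maximal consistent $X_m\supseteq X$ with $X_m\models R$, and by Lemma~\ref{lem:flip}(b) $X_m$ also satisfies $\Cons$ and $\aux(R)$.

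Put $X^*=X_m\cap\mathit{Lit}(P)$, which is finite and includes $X$. First check that every $S$ with $X\subseteq S\subseteq X^*$ is not violating, since it is included in $X_m$. Then run the argument of Lemma~\ref{lem:reach} with $X^*$ in place of the model $X$, and with measure $|X^*\setminus S|$. Suppose some imperative of $R_a$ has its body in $S$ but its head disjoint from $S$. Then:
\begin{itemize}
\item A member of $\Cons$ cannot fire, because $S\subseteq X_m$ is consistent.
\item For a member $B\must C$ of $R$, we have $B\subseteq S\subseteq X_m$, so some $t\in C\cap X_m$. Since $t$ lies over an atom of $R$, $t\in X^*$, and $t$ is a step of $R\subseteq R_a$ from $S$.
\item For a member $V\setminus\{\ell\}\must\cmp{\ell}$ of $\aux(R)$, the rule is satisfied by $X_m$, so $\cmp{\ell}\in X_m$. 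The remaining task is $\mathit{at}(\ell)\in P$.
\end{itemize}
Adding $t$ strictly decreases the measure, and the process ends at some $T$ with $X\subseteq T\subseteq X^*$ and $T\in\cns(R_a,X)$. No imperative of $R_a$ is then violated by $T$, so $T\in\out(R_a,X)$. The identity $\outtwo(R,X)=\out(R_a,X)$ is just the definition.

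The main obstacle is the $\aux(R)$ case, where $\ell$ may lie over an atom outside $P$. Here $V\setminus\{\ell\}\subseteq S\subseteq\mathit{Lit}(P)$ and $V$ is violating. If $\mathit{at}(\ell)\notin P$, then, provided $\cmp{\ell}\notin V$, Lemma~\ref{lem:flip}(a) applied to $W=V\setminus\{\ell\}$ would make $V$ non-violating. For that we need $W$ itself not violating, which holds because $W\subseteq S\subseteq X_m$. This is a contradiction, so $\mathit{at}(\ell)\in P$ and $\cmp{\ell}\in X^*$. If instead $\cmp{\ell}\in V$, then $\cmp{\ell}\in V\setminus\{\ell\}\subseteq S$, so the head already meets $S$ and the rule is not violated. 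This is the only genuinely new ingredient beyond Lemma~\ref{lem:reach}, and I expect to spend the most care checking it.

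For the consequence, argue contrapositively. If $X$ is not violating, the $T$ just constructed is an outcome, hence a finite model of $R_a$ including $X$. So if no model of $R_a$ includes $X$, then $X$ must be violating.
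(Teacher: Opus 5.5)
Your proof is correct and follows the paper's argument. You run the same walk inside $X_m\cap\mathit{Lit}(P)$, and in the $\aux(R)$ case you apply Lemma~\ref{lem:flip}(a) to $W=V\setminus\{\ell\}\subseteq X_m$ to exclude $\mathit{at}(\ell)\notin P$. Your separate case $\cmp{\ell}\in V$ is harmless but unnecessary: $W\subseteq\mathit{Lit}(P)$ together with $\mathit{at}(\ell)\notin P$ already gives $\mathit{at}(\ell)\notin\mathit{at}(W)$.
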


\begin{proof}
Fix a maximal consistent $X_m\supseteq X$ with $X_m\models R$; by
Lemma~\ref{lem:flip}(b), $X_m$ satisfies every rule of $R_a$. Let
$L_0=\mathit{Lit}(\mathit{at}(R)\cup\mathit{at}(X))$. Starting from $T=X$, as
long as some imperative of $R_a$ is violated by $T$, add to $T$ a literal of
its head that belongs to $X_m$. We maintain $T\subseteq X_m\cap L_0$ and
$T\in\cns(R_a,X)$. Since $X_m$ satisfies the violated imperative and
includes its body, its head contains some $h\in X_m$, and $h\notin T$; in
particular the imperative is not a constraint. If it belongs to $R$, then
$h\in L_0$. Otherwise it is a rule $V\setminus\{\ell\}\must\cmp{\ell}$ of
$\aux(R)$ and $h=\cmp{\ell}$; as $V\setminus\{\ell\}\subseteq T\subseteq
L_0$, if $\mathit{at}(\ell)$ were not in $\mathit{at}(R)\cup\mathit{at}(X)$,
Lemma~\ref{lem:flip}(a) applied to $W=V\setminus\{\ell\}\subseteq X_m$ would
show that $V$ is not violating. So $h\in L_0$ in all cases. As
$X_m\cap L_0$ is finite, the construction stops, with a set $T$ that
satisfies every imperative of $R_a$, that is, $T\in\out(R_a,X)$. The last
claim follows because $T$ is a model of $R_a$ that includes $X$.
\end{proof}

\begin{lemma}\label{lem:violating}
For every finite set $X$ of literals, $\outtwo(R,X)=\emptyset$ if and only if
$X$ is violating for $R$.
\end{lemma}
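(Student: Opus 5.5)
The plan is to prove the two directions separately, using Lemma~\ref{lem:reach2} for one and a direct argument about outcomes of $R_a$ for the other.

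\emph{If $X$ is not violating, then $\outtwo(R,X)\neq\emptyset$.} This is exactly the first claim of Lemma~\ref{lem:reach2}. That lemma produces a set $T\in\out(R_a,X)=\outtwo(R,X)$, so nothing further is needed for this direction.

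\emph{If $X$ is violating, then $\outtwo(R,X)=\emptyset$.} I would argue by contradiction. Suppose $T\in\out(R_a,X)$. Then $T$ is a finite set of literals with $X\subseteq T$ (Lemma~\ref{lem:traces}(a)) and $T\models R_a$. The constraints in $\Cons$ make $T$ consistent. I would then show that $T$ is not violating, which is impossible: every superset of a violating set is violating, since any maximal consistent set including the superset also includes $X$.

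To show $T$ is not violating, I would extend $T$ to a maximal consistent set one atom at a time. I would use $\aux(R)$ to choose a polarity for each atom not yet decided. The key claim is that if $T'$ is finite, consistent, satisfies $\aux(R)$ in the relevant sense and is not violating, it can be extended by $a$ or by $\nt a$ while staying non-violating. But this is circular, because it presupposes non-violation of $T$.

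A direct route is better. Suppose $T$ is violating. Since $T$ is finite, $T$ itself is a finite violating set. If $T=\emptyset$, the empty set is violating, so no maximal consistent set satisfies $R$. In that case I need to check separately that $\out(R_a,X)$ is empty. It is, because every set violates some member of $\aux(R)$, namely $\emptyset\setminus\{\ell\}$, which is impossible for the empty set $V$. I would rather handle this via $\ell\in V$, and so pick the argument below instead.

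For $T\neq\emptyset$, choose $\ell\in T$. Then $T\setminus\{\ell\}\must\cmp{\ell}$ belongs to $\aux(R)$. Its body lies in $T$, and $T\models R_a$, so $\cmp{\ell}\in T$. Together with $\ell\in T$ this contradicts consistency, which is guaranteed by $\Cons$.

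The remaining edge case is $T=\emptyset$, which forces $X=\emptyset$. There, violation of $\emptyset$ means that no maximal consistent set satisfies $R$. I expect this to be the main obstacle, since $\aux(R)$ gives no rule with an empty body $V$. I would first try a compactness argument. Because $R$ is finite, the literals over $\mathit{at}(R)$ already decide satisfaction. So the empty set is violating exactly when every finite maximal assignment to $\mathit{at}(R)$ falsifies $R$. Then, for any atom $a\in\mathit{at}(R)$, both $\{a\}$ and $\{\nt a\}$ are violating, so $\aux(R)$ contains $\top\must\nt a$ and $\top\must a$. These rules force $\emptyset$ to be a non-model. If $\mathit{at}(R)=\emptyset$, then $R$ contains $\top\must\bot$ or is trivially satisfied, and either way the claim is immediate.
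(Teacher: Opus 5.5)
Your final argument is correct and is essentially the paper's. You use Lemma~\ref{lem:reach2} for one direction; for the other you observe that an outcome $T$ would be finite, consistent and violating, so $T\setminus\{\ell\}\must\cmp{\ell}\in\aux(R)$ gives a contradiction when $T\neq\emptyset$, and $\top\must a$, $\top\must\nt a\in\aux(R)$ rule out $T=\emptyset$. The compactness remark, the restriction to $a\in\mathit{at}(R)$ and its separate case $\mathit{at}(R)=\emptyset$ are unnecessary, since $\{a\}$ and $\{\nt a\}$ are violating for every atom $a$ simply because supersets of the violating set $\emptyset$ are violating; the abandoned circular extension argument and the garbled sentence about $\emptyset\setminus\{\ell\}$ should be deleted.
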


\begin{proof}
If $X$ is not violating, Lemma~\ref{lem:reach2} gives an element of
$\outtwo(R,X)$. Conversely, let $X$ be violating and suppose
$T\in\outtwo(R,X)$. Then $T$ is finite, consistent (it satisfies $\Cons$) and
violating (it includes $X$). If $T\neq\emptyset$, pick $\ell\in T$: the rule
$T\setminus\{\ell\}\must\cmp{\ell}$ of $\aux(R)$ forces $\cmp{\ell}\in T$,
contradicting consistency. If $T=\emptyset$, then for any atom $a$ the sets
$\{a\}$ and $\{\nt a\}$ are violating, so $\top\must\nt a$ and $\top\must a$
belong to $\aux(R)$ and cannot both be satisfied by a consistent $T$.
\end{proof}

\begin{lemma}[Resolution; cf.~{\cite[Prop.~12]{ESS20}}]\label{lem:resolution}
Every member of $v^*(R)$ is violating for $R$, so that
$\aux_e(R)\subseteq\aux(R)$. Conversely, every consistent finite set of
literals that is violating for $R$ includes a member of $v^*(R)$.
\end{lemma}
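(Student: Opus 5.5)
The statement has two halves. I would prove the first by induction on the generation of $v^*(R)$. The second I would prove by induction on a measure that counts literals outside $\mathit{at}(R)$ together with a finite-branching choice over atoms of $R$, essentially the completeness of propositional resolution.

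\emph{First half.} The base case is $V=B\cup\cmp{C}\in v(R)$ with $B\must C\in R$. Any maximal consistent $X_m\supseteq V$ contains $B$ and $\cmp{C}$. Hence it meets no literal of $C$, so it fails $B\must C$; thus $V$ is violating.

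For the resolution step, let $A\cup\{\ell\}$ and $B\cup\{\cmp{\ell}\}$ be violating, and let $X_m\supseteq A\cup B$ be maximal consistent. Then $X_m$ contains $\ell$ or $\cmp{\ell}$. In the first case it includes $A\cup\{\ell\}$, in the second $B\cup\{\cmp{\ell}\}$, and either way it does not satisfy $R$. So $A\cup B$ is violating. (If $A\cup B$ is inconsistent it is trivially violating.)

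Since every member of $v^*(R)$ is violating, each rule $A\setminus\{\ell\}\must\cmp{\ell}$ of $\aux_e(R)$ is a rule of $\aux(R)$.

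\emph{Second half.} Let $W$ be consistent, finite and violating. First I discard literals over atoms not in $\mathit{at}(R)$. Let $W_0=W\cap\mathit{Lit}(\mathit{at}(R))$. I claim $W_0$ is still violating. Otherwise Lemma~\ref{lem:flip}(a), applied repeatedly with the literals of $W\setminus W_0$ (their atoms lie outside $\mathit{at}(R)$ and, $W$ being consistent, they are pairwise on distinct atoms), would make $W$ non-violating. So I may assume $W\subseteq\mathit{Lit}(P)$ with $P=\mathit{at}(R)$ finite.

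Next I show, by downward induction on $|\mathit{at}(W)|$ within $P$, that every consistent violating $W\subseteq\mathit{Lit}(P)$ includes a member of $v^*(R)$.
\begin{itemize}
\item If $\mathit{at}(W)=P$, extend $W$ arbitrarily outside $P$ to a maximal consistent $X_m$. This $X_m$ fails some $B\must C\in R$, so $B\cup\cmp{C}\subseteq X_m\cap\mathit{Lit}(P)=W$. Since $W$ is consistent, $B\cup\cmp{C}$ is consistent and lies in $v(R)$.
\item Otherwise pick $p\in P\setminus\mathit{at}(W)$. Both $W\cup\{p\}$ and $W\cup\{\nt p\}$ are consistent and violating, so by induction they include members $V_1,V_2$ of $v^*(R)$. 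If $p\notin V_1$, then $V_1\subseteq W$ and we are done; likewise if $\nt p\notin V_2$. Otherwise resolving $V_1$ and $V_2$ on $p$ gives $(V_1\setminus\{p\})\cup(V_2\setminus\{\nt p\})\subseteq W$.
\end{itemize}

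\emph{Main obstacle.} The delicate point is that $v^*(R)$ as defined may be intended to contain only consistent sets, and the resolvent must stay in it. Here the resolvent is included in the consistent set $W$, so it is consistent, and I would check this carefully. If $v^*$ allows inconsistent resolvents, nothing changes. Everything else is routine.
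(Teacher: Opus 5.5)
Your proof is correct and follows essentially the paper's route: the first half is identical, and the second half is the same induction on undecided atoms, splitting on $p$ and resolving $V_1,V_2$ on $p$ when neither already lies in $W$. The one difference is your preliminary restriction to $\mathit{Lit}(\mathit{at}(R))$ via Lemma~\ref{lem:flip}(a); it is valid but unnecessary, because a measure that counts only the undecided atoms of $P$ already ignores the other literals, and the base case gives $B\cup\cmp{C}\subseteq X_m\cap\mathit{Lit}(P)\subseteq W$ for any consistent $W$ that decides every atom of $P$. The paper accordingly counts the undecided atoms occurring in $v^*(R)$ and skips that step.
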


\begin{proof}
A maximal consistent superset of $B\cup\cmp{C}$ does not satisfy $B\must C$,
so the members of $v(R)$ are violating. If $A\cup\{\ell\}$ and
$B\cup\{\cmp{\ell}\}$ are violating, so is $A\cup B$: a maximal consistent
$X_m\supseteq A\cup B$ contains $\ell$ or $\cmp{\ell}$, and therefore
includes one of the two sets. This proves the first claim.

For the second, let $\Gamma=v^*(R)$, a finite set by Lemma~\ref{lem:auxe},
and let $V$ be consistent and violating. Every maximal consistent superset
of $V$ includes a member of $v(R)\subseteq\Gamma$. We argue by induction on
the number $k$ of atoms that occur in $\Gamma$ and on which $V$ is
undecided, i.e.\ such that $V$ contains neither the atom nor its negation.
If $k=0$, let $X_m$ be a maximal consistent superset of $V$; it includes some
$W\in\Gamma$, and since every literal of $W$ is over an atom decided by $V$,
on which $X_m$ agrees with $V$, we have $W\subseteq V$. If $k>0$, pick such
an atom $p$. The sets $V\cup\{p\}$ and $V\cup\{\nt p\}$ are consistent,
violating, and undecided on $k-1$ atoms of $\Gamma$, so by induction they
include members $W_1$ and $W_2$ of $\Gamma$. If $p\notin W_1$ or
$\nt p\notin W_2$, then $W_1\subseteq V$ or $W_2\subseteq V$. Otherwise
$W_1=A\cup\{p\}$ and $W_2=B\cup\{\nt p\}$ with $A,B\subseteq V$, and their
resolvent $A\cup B\in\Gamma$ is included in $V$.
\end{proof}

\begin{proof}[Proof of Proposition~\ref{prop:decomposition}]
(1)~$\kltwofn(R)\subseteq\kl(R_a)$. Let $R\entailstwo r$. If $V$ is a finite
violating set for $R\cup\{r\}$, then $\outtwo(R\cup\{r\},V)=\emptyset$ by
Lemma~\ref{lem:violating} applied to the finite set $R\cup\{r\}$, hence
$\outtwo(R,V)=\emptyset$, and $V$ is violating for $R$ by
Lemma~\ref{lem:violating}. The converse is immediate, so
$\aux(R\cup\{r\})=\aux(R)$, and for every $A$,
$\out(R_a\cup\{r\},A)=\outtwo(R\cup\{r\},A)=\outtwo(R,A)=\out(R_a,A)$.

(2)~$\kl(R_a)\subseteq\kltwofn(R)$. Let $R_a\entails r$ with $r=B\anyto H$.
As in (1), it suffices to show that every finite violating set $V$ for
$R\cup\{r\}$ is violating for $R$. This is immediate if $r$ is a permissive.
Let $r$ be an imperative, and suppose that $X_m\supseteq V$ is maximal
consistent with $X_m\models R$; then $X_m\not\models r$, so $B\subseteq X_m$
and $H\cap X_m=\emptyset$. Let $T=X_m\cap\mathit{Lit}(\mathit{at}(R)\cup
\mathit{at}(r))$, a finite set that includes $B$ and satisfies $R$ and $\Cons$.
It also satisfies $\aux(R)$: if $W\setminus\{\ell\}\must\cmp{\ell}$ is a rule
of $\aux(R)$ with $W\setminus\{\ell\}\subseteq T$, then
$\cmp{\ell}\in X_m$ by Lemma~\ref{lem:flip}(b), and
$\mathit{at}(\ell)\in\mathit{at}(R)\cup\mathit{at}(W\setminus\{\ell\})
\subseteq\mathit{at}(R)\cup\mathit{at}(r)$ by Lemma~\ref{lem:flip}(a), so
$\cmp{\ell}\in T$. Thus $T$ is a model of $R_a$ that includes $B$ and does not
meet $H$, contradicting $\Valid_{R_a}(B,H)$, which holds by
Proposition~\ref{prop:local}.

(3)~$\out(R_a,A)=\out(R_e,A)$ for all $A$; then $\kl(R_a)=\kl(R_e)$ by
Corollary~\ref{cor:invariance}. By Lemma~\ref{lem:resolution},
$R_e\subseteq R_a$. We first show that $R_e\derives W\must\bot$ for every
$W\in v^*(R)$, in the original calculus of Fig.~\ref{fig:rules}. If
$W\neq\emptyset$, pick $w\in W$: from $W\setminus\{w\}\must\cmp{w}$ in
$\aux_e(R)$, \rn{must-si} gives $W\must\cmp{w}$; from the member of $\Cons$ for
the atom of $w$, \rn{must-si} gives $W\wedge\cmp{w}\must\bot$; and
\rn{must-trans} gives $W\must\bot$. If $W=\emptyset$, either
$\top\must\bot\in R$, or $\emptyset$ is the resolvent of some $\{\ell\}$ and
$\{\cmp{\ell}\}$ in $v^*(R)$; then $\top\must\cmp{\ell}$ and $\top\must\ell$
belong to $\aux_e(R)$, \rn{must-si}, the member of $\Cons$ for the atom of $\ell$
and \rn{must-trans} give $\ell\must\bot$, and \rn{must-trans} applied to
$\top\must\ell$ gives $\top\must\bot$.

Next, every rule $\rho=V\setminus\{\ell\}\must\cmp{\ell}$ of $\aux(R)$ is
derivable from $R_e$. If $V$ contains a pair $a,\nt a$ and both lie in
$V\setminus\{\ell\}$, then $\rho$ follows from $a\wedge\nt a\must\bot$ by
\rn{must-si} and \rn{quod-libet}; if one of them is $\ell$, then
$\cmp{\ell}\in V\setminus\{\ell\}$, and $\rho$ follows by \rn{must-id} and
\rn{must-si}. If $V$ is consistent, Lemma~\ref{lem:resolution} gives some
$W\in v^*(R)$ with $W\subseteq V$. If $\ell\in W$, then $\rho$ follows from
$W\setminus\{\ell\}\must\cmp{\ell}\in\aux_e(R)$ by \rn{must-si}; otherwise
$W\subseteq V\setminus\{\ell\}$, and $\rho$ follows from $W\must\bot$ by
\rn{must-si} and \rn{quod-libet}.

By Theorem~\ref{thm:soundness}, which applies to the infinite set $R_e$,
every $\rho\in\aux(R)$ is entailed by $R_e$. Let $X\in\out(R_a,A)$. Its
trace uses only finitely many rules $\rho_1,\dots,\rho_k$ of
$\aux(R)\setminus\aux_e(R)$, so
$X\in\out(R_e\cup\{\rho_1,\dots,\rho_k\},A)=\out(R_e,A)$, by $k$ applications
of Corollary~\ref{cor:invariance}. Conversely, if $X\in\out(R_e,A)$, then
$X\in\cns(R_a,A)$, and $X$ satisfies each $\rho\in\aux(R)$ because
$X\in\out(R_e\cup\{\rho\},A)$. Hence $X\in\out(R_a,A)$.
\end{proof}

\end{document}